\def\namedlabel#1#2{\begingroup
    #2%
    \def\@currentlabel{#2}%
    \phantomsection\label{#1}\endgroup
}
\newtheorem{theorem}{Theorem}[section]
\newtheorem{claim}{Claim}[section]
\newtheorem{lemma}{Lemma}[section]
\theoremstyle{remark}
\providecommand{\customgenericname}{}
\newcommand{\newcustomtheorem}[2]{%
  \newenvironment{#1}[1]
  {%
   \renewcommand\customgenericname{#2}%
   \renewcommand\theinnercustomgeneric{##1}%
   \innercustomgeneric
  }
  {\endinnercustomgeneric}
}
\title{The Blocker Postulates for Measures of Voting Power}
\author{A. Abizadeh\thanks{Department of Political Science, McGill University: {\tt arash.abizadeh@mcgill.ca}} 
\and A. Vetta\thanks{Department of Mathematics \& Statistics and School of Computer Science, McGill University: {\tt adrian.vetta@mcgill.ca}}}
\begin{document}

\maketitle

\begin{abstract}
A proposed measure of voting power should satisfy two conditions to be plausible:
first, it must be conceptually justified, capturing the intuitive meaning of what voting power is;
second, it must satisfy reasonable postulates.
This paper studies a set of postulates, appropriate for a priori voting power,
concerning blockers (or vetoers) in a binary voting game.
We specify and motivate five such postulates, namely,
two {\em subadditivity blocker postulates}, two {\em minimum-power blocker postulates}, each in weak and strong versions,
and the {\em added-blocker postulate}.
We then test whether three measures of voting power,
namely the classic Penrose-Banzhaf measure, the classic Shapley-Shubik index,
and the newly proposed Recursive Measure, satisfy these postulates.
We find that the first measure fails four of the postulates,
the second fails two,
while the third alone satisfies all five postulates.
This work consequently adds to the plausibility of the Recursive Measure as a reasonable measure of voting power.

\end{abstract}

\section{Introduction}\label{sec:intro}

A proposed measure of voting power should satisfy two conditions to be plausible:
first, it must be conceptually justified, in the sense that it captures the intuitive meaning of what voting power is;
second, it must satisfy reasonable postulates for measures of voting power.
Numerous postulates have been defended in the voting-power literature: most are for {\em a priori} voting power
(i.e., voting power solely in virtue of the formal voting structure itself, constituted by the agenda of potential outcomes, the sets of actors, 
their action profiles, and the decision function mapping vote configurations onto outcomes), while others are for {\em a posteriori} voting power
(i.e., voting power also in virtue of the distribution of preferences, and consequent incentives for strategic interaction, within the voting structure)
(Felsenthal and Machover 1998; Laruelle and Valenciano 2005a).

This paper studies a set of postulates, appropriate for a priori voting power, concerning blockers (or vetoers) in a binary voting game.
Our aim is two-fold. First, to specify and motivate five postulates concerning blockers, namely,
two {\em subadditivity blocker postulates}, two {\em minimum-power blocker postulates}, each in weak and strong versions,
and the {\em added-blocker postulate}.
Second, to test whether three measures of voting power,
namely the classic Penrose-Banzhaf measure (PB), the classic Shapley-Shubik index (SS),
and the newly proposed Recursive Measure (RM), satisfy these postulates.
We find that PB fails the first four postulates,
SS fails the strong subadditivity blocker postulate and the added-blocker postulate,
while RM alone satisfies all five postulates.
Further, it is already known (Abizadeh and Vetta 2021) that RM satisfies a plethora of other 
reasonable postulates, including the iso-invariance, dummy, dominance, donation, minimum-power bloc, and quarrel postulates.
This work consequently adds to the plausibility of RM as a reasonable measure of voting power.

\section{Three Measures of Voting Power}\label{sec:MVP}

In this section, we present the three measures of voting power studied in this paper.
The first are the two classic measures of voting power, namely, the Penrose-Banzhaf measure and Shapley-Shubik index.
The third is the aforementioned Recursive Measure proposed by Abizadeh and Vetta.
Before formally defining these three measures, we introduce the notions of a simple voting game and a measure of voting power.

\subsection{Simple Voting Games}\label{sec:SVG}

Here we present the class of voting games, called {\em simple voting games} (SVGs),
for which it would be reasonable to expect measures of voting power
to satisfy our postulates.
Denote by $[n]=\{1,2,\dots, n\}$ a nonempty, finite set of players with $z=2$ strategies, voting {\sc yes} or voting {\sc no}.
Let $\mathcal{O}$=\{{\sc yes}, {\sc no}\} be the set of alternative outcomes.
A {\em division} or complete vote configuration $\mathbb{S}=(S, \bar{S})$ of the set $[n]$ is an ordered partition of players where 
the first element in the 
ordered pair is the set of {\sc yes}-voters and the second element is the set of {\sc no}-voters in $\mathbb{S}$.
Thus, for $\mathbb{S}=(S, \bar{S})$, the subset $S\subseteq [n]$ comprises the set of {\sc yes}-voters 
and the subset $\bar{S}= [n]\setminus S$ comprises the set of {\sc no}-voters.
Note the convention of representing a bipartitioned division by its first element in blackboard bold.

Let $\mathcal{D}$ be the set of all logically possible divisions $\mathbb{S}$ of $[n]$.
A {\em binary voting game}, in which each player has two possible strategies,
is a function $\mathcal{G}(\mathbb{S})$ mapping the set of all possible divisions $\mathcal{D}$ to the two outcomes in $\mathcal{O}$.
A {\em monotonic} binary voting game is one satisfying the condition:\\
\indent (i) {\tt Monotonicity.} If $\mathcal{G}$($\mathbb{S}$)={\sc yes} and $S\subseteq T$, then $\mathcal{G}(\mathbb{T})$={\sc yes}.\\
A SVG is a monotonic binary voting game that also satisfies {\em non-triviality}:\\
\indent (ii) {\tt Non-Triviality.} $\exists \mathbb{S}$ $\mathcal |\ {G}$($\mathbb{S}$)={\sc yes} 
and $\exists \mathbb{T}$ $\mathcal |\ {G}$($\mathbb{T}$)={\sc no}.\\
Monotonicity and non-triviality jointly ensure that SVGs also have the following property:\\
\indent (iii) {\tt Unanimity.} $\mathcal{G}((\emptyset, [n]))$={\sc no} and $\mathcal{G}(([n], \emptyset))$={\sc yes}.\\
Note that {\em unanimity} itself implies non-triviality. Thus conditions (i) and (iii) also characterize the class of SVGs.

Call any player whose vote corresponds to the division outcome a {\em successful} player.
Let $\mathcal{W}$ be the collection of all sets of players $S$ such that $\mathcal{G}$($\mathbb{S}$)={\sc yes} (that is, if each 
member of $S$ were to vote {\sc yes}, they would be successful {\sc yes}-voters).
We call this the collection of {\em {\sc yes}-successful subsets} of $[n]$, also commonly called {\em winning coalitions}.
We can now alternatively characterize conditions (i)-(iii) as:\\
\indent (i) {\tt Monotonicity.} If $S\in \mathcal{W}$ and $S\subseteq T$ then $T\in \mathcal{W}$.\\
\indent (ii) {\tt Non-Triviality.} $\exists S |$ $S \in \mathcal{W}$ and $\exists T |$ $T \notin \mathcal{W}$\\
\indent (iii) {\tt Unanimity.} $[n]\in \mathcal{W}$ and $\emptyset \notin \mathcal{W}$.\\
In the discussion and proofs that follow, as is standard in the voting-power literature, we assume that our voting games are SVGs.

\subsection{Voting Power}\label{sec:voting-power}
We define a {\em measure of voting power} for SVGs as a function $\Psi$ that assigns to each player $i$ a nonnegative real number $\Psi_i \geq 0$
and that satisfies two sets of basic adequacy postulates:
the {\em iso-invariance} postulate, according to which the a priori voting power of any player according to that measure remains the same between two isomorphic games;
and the {\em dummy} postulates, according to which a player has zero a priori voting power if and only if it is a 
dummy (i.e., it is not decisive in any division),
and the addition of a dummy to a voting structure leaves other players' a priori voting power unchanged (Felsenthal and Machover 1998: 236).

A measure of voting power can be represented as assigning to each player $i$ a value
\begin{align*}
\Psi_i &= \sum_{\mathbb{S}\in \mathcal{D}} \alpha_i(\mathbb{S})\cdot \gamma(\mathbb{S})
\end{align*}
where $\alpha_i(\mathbb{S})$ is the {\em division efficacy score} of player $i$ in division $\mathbb{S}$
and $\gamma(\mathbb{S})$ is the {\em division weight} assigned to $\mathbb{S}$ for any division $\mathbb{S}\in \mathcal{D}$.
The defining characteristic of a given measure of voting power is therefore its specification of a player's division efficacy score for each division
and each division's weight.
We shall label the {\em a priori} voting power of a player according to a measure~$\Psi$,
i.e., in abstraction from any information about the distribution of preferences,
using the lower case~$\psi$.

Before proceeding, we introduce one further set of concepts.
Let a player $i$'s {\em {\sc yes}-efficacy score} $\alpha_i^+$ be equal to $\alpha_i$ in divisions in which $i$ votes {\sc yes}, equal to 0 otherwise;
and $i$'s {\em {\sc no}-efficacy score} $\alpha_i^-$ be equal to $\alpha_i$ in divisions in which $i$ votes {\sc no}, equal to 0 otherwise.
We say that a player's {\em {\sc yes}-voting power} $\Psi^+$ sums over its weighted {\sc yes}-efficacy scores,
and its {\em {\sc no}-voting power} $\Psi^-$ sums over its weighted {\sc no}-efficacy scores. That is, for SVGs,
\begin{align*}
\Psi_i^+ = \sum_{\mathbb{S}\in \mathcal{D}} \alpha_i^+(\mathbb{S})\cdot \gamma(\mathbb{S}) \qquad\qquad
\Psi_i^- = \sum_{\mathbb{S}\in \mathcal{D}} \alpha_i^-(\mathbb{S})\cdot \gamma(\mathbb{S}) \qquad\qquad
\Psi_i = \Psi_i^+ + \Psi_i^-
\end{align*}

We may now present the three aforementioned measures of voting power.

\subsection{The Penrose-Banzhaf Measure}\label{sec:PB}

The Penrose-Banzhaf (PB) measure bases a voter's division efficacy score $\alpha_i(\mathbb{S})$ on being {\em decisive},
i.e., being in a position in which one could have effected a different outcome by (unilaterally) voting differently than one did in a given division.
The concept can be formalized for SVGs as follows:
a player $i$ is {\sc yes}{\em -decisive} in division $\mathbb{S}$ if and only if $i\in S\in \mathcal{W}$ but $S\setminus\{i\}\notin \mathcal{W}$;
is {\sc no}{\em -decisive} if and only if $i\notin S\notin \mathcal{W}$ but $S\cup\{i\}\in \mathcal{W}$;
and is {\em decisive} if and only if it is either {\sc yes}-decisive or {\sc no}-decisive.
PB, which was originally conceived as a measure of a priori voting power,
then equates a player's voting power with the proportion of logically possible divisions in which it is decisive.
This, in turn, is typically taken to represent the ex ante {\em probability} that a player $i$ will be decisive in a voting structure,
under the assumptions of {\em voting independence} (votes are not correlated)
and {\em equiprobable voting} (the probability a player votes for one alternative equals the probability it votes for any other),
which together imply {\em equiprobable divisions} --
which assumptions model the a prioristic abstraction from voter preferences. 

In particular, PB is defined by specifying a voter's division efficacy score as
$$
\alpha^{PB}_i(\mathbb{S}) =
\begin{cases}
1& \mathrm{if\ } i \mathrm{\ is\ decisive\ in\ } \mathbb{S}\\
0 & \mathrm{otherwise}
\end{cases}
$$

\noindent A division's weight $\gamma(\mathbb{S})$, in turn, is typically interpreted in the Penrose-Banzhaff model as $\mathbb{S}$'s 
ex ante {\em probability}.
In the general, a posteriori case, a division's probability $\mathbb{P}(\mathbb{S})$ would be a function of the actual distribution of voter preferences;
but a division's probability $\mathbbm{p}(\mathbb{S})$ in the a priori case
(which we represent again using the lower case) assumes equiprobable divisions.
For binary voting games ($z=2$), this yields:

$$
\gamma^{PB}(\mathbb{S}) = \mathbbm{p}(\mathbb{S}) = \frac{1}{|\mathcal{D}|} = \frac{1}{z^n} = \frac{1}{2^n}
$$

Notice that, because PB calculates a voter's division efficacy score strictly on the basis of whether the voter is decisive in that division,
its efficacy scores are what we shall call {\em strategy symmetric}, that is,
a player's efficacy score in a given division is equal to its efficacy score in any other division that is identical to it but for the player's own vote.
In the case of SVGs, where $z=2$,
this is because for every division in which the voter is {\sc yes}-decisive
there is precisely one corresponding division in which the voter is {\sc no}-decisive
(involving the two divisions that are identical except for the vote of the player in question).
It follows that in SVGs the number of divisions in which the player plays a given strategy
and is decisive equals the number of divisions in which it plays any other strategy and is decisive.
Strategy symmetry implies that $PB_i^+=PB_i^-$.

It follows that PB can be calculated via a shortcut $PB^*$, on the basis of solely {\sc yes}-decisiveness
(indeed, PB is typically {\em defined} in this way in the literature),
setting a player's division efficacy score as:

$$
\alpha^{PB^*}_i(\mathbb{S}) = \alpha^{+PB}_i(\mathbb{S}) =
\begin{cases}
1& \mathrm{if\ } i \mathrm{\ is\ {\text{{\sc yes}-decisive}} \ in\ } \mathbb{S}\\
0 & \mathrm{otherwise}
\end{cases}
$$
and each division's weight (again, for a priori power) as:
$$
\gamma^{PB^*}(\mathbb{S}) =  \frac{1}{|\mathcal{D}|/z} = \frac{1}{z^{n-1}} = \frac{1}{2^{n-1}}
$$
\noindent Precisely because PB is strategy symmetric, we could also construct a corresponding shortcut based on {\sc no}-decisiveness.

\subsection{The Shapley-Shubik Index}\label{sec:SS}

When Shapley and Shubik (1954) initially introduced their index,
they characterized a player's a priori voting power as equal to the proportion of permutations (ordered sequences) of voters in which a 
voter would be {\em pivotal},
i.e., the probability that the player would be pivotal if all permutations of voters are equiprobable.
A pivotal voter is one who, in an ordered sequence of voters who sequentially vote in favour of an alternative,
is the first whose vote secures it regardless of how subsequent voters vote.
Subsequent analysis has shown that being pivotal is analytically reducible to the notion of decisiveness (Turnovec et al. 2008),
which is why we can define Shapley-Shubik (SS) index using our general formula for $\Psi$ above.
We begin by specifying a player's division efficacy score, as with PB, as follows:

$$
\alpha^{SS}_i(\mathbb{S}) =
\begin{cases}
1& \mathrm{if\ } i \mathrm{\ is \ decisive \ in\ } \mathbb{S}\\
0 & \mathrm{otherwise}
\end{cases}
$$
\noindent and then set each division's weight for SVGs, where $k$ equals the number of voters whose vote agrees with $i$'s vote 
in division $\mathbb{S}$, as:
$$
\gamma^{SS}(\mathbb{S}) = \frac{(k-1)!\cdot (n-k)!}{2n!}
$$

Because SS, like PB, is strategy symmetric, it too can be calculated via a shortcut $SS^*$, on the basis solely 
of {\sc yes}-decisiveness; indeed, this is how SS is typically defined in the literature (e.g. Felsenthal and Machover 1998):

$$
\alpha^{SS^*}_i(\mathbb{S}) = \alpha^{+SS}_i(\mathbb{S}) =
\begin{cases}
1& \mathrm{if\ } i \mathrm{\ is\ {\text{{\sc yes}-decisive}} \ in\ } \mathbb{S}\\
0 & \mathrm{otherwise}
\end{cases}
$$

\noindent and:

$$
\gamma^{SS^*}(\mathbb{S}) = \frac{(|S|-1)!\cdot (n-|S|)!}{n!}
$$
\noindent As with PB, we could also construct the corresponding shortcut via {\sc no}-decisiveness.

\subsection{The Recursive Measure}\label{sec:RM}

The Recursive Measure (RM) specifies division efficacy scores in terms of not just the voter's decisiveness in the division,
but recursively in terms of its {\em degree of efficacy} in effecting the outcome, i.e., allowing for {\em partial} efficacy,
where being decisive amounts to being {\em fully} efficacious.
We can formalize a player's degree of efficacy via the concept of a division's {\em loyal children}.
Call a division $\mathbb{S}$ {\em winning} if its outcome is {\sc yes} and {\em losing} if its outcome is {\sc no}.
For winning divisions, we say that a division $\mathbb{T}$ is a {\em loyal child} of $\mathbb{S}$ (and $\mathbb{S}$ is 
a {\em loyal parent} of $\mathbb{T}$) if and only if $S=T\cup \{j\}$.
That is, $\mathbb{T}$ is identical to $\mathbb{S}$ except that exactly one less player 
votes {\sc yes} in $\mathbb{T}$ than in $\mathbb{S}$.
The nomenclature {\em loyal} refers to the fact that $\mathbb{S}$ and $\mathbb{T}$ have the same outcome.
Symmetrically, for losing divisions, we say that $\mathbb{T}$ is a {\em loyal child} of $\mathbb{S}$ (and $\mathbb{S}$ 
is a {\em loyal parent} of $\mathbb{T}$)
if and only if $S=T\setminus \{j\}$.
That is, $\mathbb{T}$ is identical to $\mathbb{S}$ except that exactly one less player votes {\sc no} in $\mathbb{T}$ than in $\mathbb{S}$.
Moreover, we call a division's {\em loyal descendants} those divisions that are its loyal children, their loyal children, and so on.%
\footnote{For a defence of the notion of degrees of efficacy, and hence the conceptual foundations of RM, see Abizadeh~(2021).
For the motivation for the specific design of RM, see Abizadeh \& Vetta (2021).}

RM is then defined by specifying the division efficacy score recursively, for SVGs, as:
$$
\alpha_i^{RM}(\mathbb{S}) =
\begin{cases}
1& \mathrm{if\ } i \mathrm{\ is\ decisive\ in\ } \mathbb{S}\\
0 & \mathrm{if\ }  i \mathrm{\ is \ not\ successful \ in\ } \mathbb{S} \\
\frac{1}{|LC(\mathbb{S})|}\cdot \sum_{\mathbb{\hat{S}}\in LC(\mathbb{S})} \alpha_i(\hat{\mathbb{S}}) & \mathrm{otherwise}
\end{cases}
$$
where $LC(\mathbb{S})$ denotes a division $\mathbb{S}$'s set of loyal children in $\mathcal{D}$.

The division weight is interpreted (as with PB) as the division's probability $\mathbb{P}(\mathbb{S})$.
Since RM itself is a generalized (not specifically a priori) measure, we mark its a priori version by labelling it as RM'.
A priori voting power under RM again assumes equiprobable divisions;
the division weight is therefore equal to:

$$
\gamma^{RM'}(\mathbb{S}) = \mathbbm{p}(\mathbb{S}) = \frac{1}{|\mathcal{D}|} = \frac{1}{2^n}
$$

Whereas PB represents a probability (the player's probability of being decisive),
RM represents an expected value, namely, the player's expected efficacy
(which is a function of the player's degree of efficacy in each division weighted by the division's probability).
Note that because RM's division efficacy score tracks partial efficacy, the measure is {\em not} strategy symmetric;
hence the familiar shortcut is unavailable, and {\sc no}-efficacy must be accounted for separately from {\sc yes}-efficacy.

\section{The Subadditivity Blocker Postulates}\label{sec:sub-postulate}

In this section we study the subadditivity blocker postulates;
we consider the minimum-power blocker postulates and the added-blocker postulate
in Sections~\ref{sec:BPP} and~\ref{sec:blocker-postulates}, respectively.
Since the subadditivity blocker postulates concern blocs of voters that include blockers (vetoers),
we set up our analysis by first considering three {\em bloc} postulates not involving blockers.

\subsection{Three Bloc Postulates}

The ``conventional wisdom that the whole is greater than -- or at least equal to --
the sum of its parts'' might suggest that it would be paradoxical if the a priori voting power of a bloc of voters 
turned out to be less than the sum of the a priori voting power of each individual bloc member prior to forming the bloc (Brams 1975: 178).
We can formalize this conventional wisdom via a superadditivity postulate, concerning the {\em lower bounds} of a bloc's voting power, as follows.
Let $\hat{\mathcal{G}}$ be the voting game derived from $\mathcal{G}$ when a subset of players $I\subseteq [n]$ form a voting bloc.
We model the formation of a bloc as all members of $I$ fully donating their votes to a single lead member,
effectively rendering the donating players dummies who are then deleted from the game.
(Recall that, by one of the dummy postulates, every player's a priori voting power remains the same if a dummy is deleted.)
Consider the case of $I = \{i,j\}$, $|I|=2$.
The full donation from $j$ to $i$ induces a new (monotonic) game $\hat{\mathcal{G}}$ given by:
\begin{align*}
S\cup\{i,j\} \in \hat{\mathcal{W}} &\iff S\cup\{i,j\} \in \mathcal{W} \nonumber\\
S\cup\{i\} \in \hat{\mathcal{W}} &\iff S\cup\{i,j\} \in \mathcal{W} \nonumber\\
S\cup\{j\} \in \hat{\mathcal{W}} &\iff S \in \mathcal{W} \nonumber\\
S \in \hat{\mathcal{W}} &\iff  S \in \mathcal{W} \nonumber
\end{align*}
For the case $|I|\ge 3$ we generate $\hat{\mathcal{G}}$ simply by iterating this transformation.

A measure of voting power $\Psi$ (where $\hat{\Psi}_{i} $ is $i$'s voting power in $\hat{\mathcal{G}}$)
satisfies the {\em superadditivity bloc postulate} if, for any bloc $I\subseteq [n]$:
\begin{enumerate}
 \item[{\sc (spb)}] $\hat{\psi}_I \geq \sum_{i\in I} \psi_i$. 
\end{enumerate}
\noindent As Felsenthal and Machover (1988: 224-231) have argued, however,
such a postulate would be poorly motivated for measures of voting power,
and its violation not truly paradoxical. (Indeed, all three of our candidate measures would violate such a postulate.)
There are two basic reasons for this.
The first is that players who form a bloc lose their ability to act as separate individuals,
thus foreclosing possible strategies that otherwise might have been available to them.
It is therefore unreasonable to expect a bloc's power always to equal or exceed the sum of its members' power individually.
Call this argument for the unreasonability of a superadditivity postulate the {\em loss-of-freedom} rationale.

The second argument stems from a {\em decreasing-marginal-returns} dynamic.
Decreasing marginal returns, a discrete analogue of {\em concavity}, is a central concept in economics.
Individually, it applies when each additional unit of effort yields less incremental benefit than the previous unit.
Collectively,  it applies when the effort of an additional individual yields less incremental benefit when added to a larger group 
than a smaller group. For voting games, this effect is widespread. For example, a voter may be decisive in a
small bloc but not decisive in a larger bloc.
More generally, {\em increasing} the number of voters who vote with a voter above the minimum sufficient to ensure success
may {\em decrease} the efficacy of the voter.

Neither the loss-of-freedom nor the decreasing-marginal-returns rationale, however, rules out all expectations concerning lower bounds on a bloc's voting power.
For example, it is reasonable to expect a bloc to be just as powerful as any member would have been individually on its own:
on the one hand, the bloc as a whole has just as much freedom as any of its individual members would have had on their own;
on the other, there is no reason why adding or donating one voter's power to another would {\em diminish} the latter's individual power
(as noted, forming a bloc can be represented as all members transferring their voting power to a lead member).
We can formalize this expectation as follows.
Again, let $\hat{\mathcal{G}}$ be the voting game derived from $\mathcal{G}$ by forming a voting bloc $I\subseteq [n]$.
A measure of voting power $\Psi$ satisfies the {\em minimum-power bloc postulate} if, for any bloc $I\subseteq [n]$:
\begin{enumerate}
 \item[{\sc (mpb)}] $\hat{\psi}_I \geq \max_{i\in I} \psi_i$. 
\end{enumerate}

\noindent Felsenthal and Machover (1998: 255-56) have already shown that PB and SS satisfy the postulate,%
\footnote{Although they only show this for blocs of two players, the result follows for blocs of any size by applying their result sequentially.}
while Abizadeh and Vetta (2021) show that RM satisfies it.

The superadditivity bloc postulate, which we rejected, and the minimum-power bloc postulate, which we accept,
both concern the {\em lower} bounds of a bloc's power.
Are there reasonable expectations about {\em upper} bounds,
motivated by the loss-of-freedom and decreasing-marginal-returns rationales?
The most stringent expectation would be that a bloc's voting power
never be greater than the sum of its members' voting power prior to forming the bloc.
A measure of voting power $\Psi$ would meet this expectation if it satisfied the {\em subadditivity bloc postulate},
i.e., if, for any bloc $I\subseteq [n]$:
\begin{enumerate}
 \item[{\sc (sbb)}] $\hat{\psi}_I \leq \sum_{i\in I} \psi_i$. 
\end{enumerate}

It is not reasonable, however, to expect measures of voting power to satisfy the subadditivity bloc postulate.
The reason is because the loss-of-freedom and decreasing-marginal-returns dynamics
may in some circumstances be undercut or neutralized by two corresponding counterveiling dynamics.
First, as Felsenthal and Machover (1998: 229) have argued,
when a bloc of at least three members is formed,
whether the bloc's voting power is less or greater than the sum of its members' original, pre-bloc voting power
will often depend on whether, in the original voting structure,
the divisions in which any given individual would-be bloc member was efficacious tend to be ones in which other would-be bloc members vote against or with each other.
From amongst divisions in which would-be bloc members are efficacious,
the higher the proportion of divisions in which some members vote {\em against} each other,
the more we should expect the sum of their individual voting powers to be {\em higher} relative to the bloc's voting power
(because, once the bloc is formed, these ``high-efficacy'' divisions no longer exist);
by contrast, the higher the proportion of divisions, from amongst those in which would-be members are efficacious,
in which would-be members vote {\em together},
the more we should expect the sum of individual voting powers to be {\em less} relative to the bloc's voting power --
and hence the more we should expect the voting structure to induce a violation of the subadditivity bloc postulate.
In essence, this latter, {\em inefficacious-dissension} dynamic neutralizes the loss-of-freedom dynamic.

Second, decreasing marginal returns may sometimes (for example with small voting blocs) be counterveilled by {\em increasing marginal returns}.
The latter dynamic may arise because increasing the proportion of divisions in which more voters vote with a given voter
often increases (and, for monotonic games, never reduces) the proportion of divisions in which the voter is {\em successful}
and, hence, potentially efficacious or decisive.

Indeed, none of our candidate measures satisfies the subadditivity bloc postulate.
For example, consider three-voter unanimity rule in which the three voters subsequently form a bloc.
This is clearly a case displaying both inefficacious dissension
(each voter is decisive in only two divisions, in both of which other would-be bloc members all vote together)
and increasing marginal returns (the bloc is successful in all divisions).
Prior to forming the bloc, for each voter $PB_i = \frac{1}{4}$, such that $\sum{PB_i} = \frac{3}{4}$.
But once the bloc is formed, $PB_I = 1$, in violation of subadditivity.
For SS and RM, consider the weighted voting game $\mathcal{G}=\{2; 1,1,2,2,2\}$
where there are five voters with weights $\{1,1,2,2,2\}$ and a quota of $2$.
The reader may verify that for both SS and RM, $\hat{\psi}_{\{1,2\}} > \psi_1 +\psi_2$.
(In particular, $\hat{SS}_{\{1,2\}} = \frac{1}{4} \ > \ SS_{1} + SS_{2} = \frac{1}{20} + \frac{1}{20} = \frac{1}{10}$;
and $\hat{RM'}_{\{1,2\}} = \frac{19}{64} \ > \ RM'_{1} + RM'_{2} = \frac{41}{320} + \frac{41}{320} = \frac{41}{160}$.)

%Thus if the first two voters form a bloc then their power becomes greater than the sum of their original individual powers.
%This is quite natural. Neither voter $1$ nor voter $2$ can individually vote to guarantee a {\sc yes}-outcome,
%but they can as a pair. This form of ``pair-decisiveness" or ``group-decisiveness" is a sign that increasing marginal returns
%arise in the voting game.

Thus, whilst the loss-of-freedom and decreasing-marginal-returns rationales render unreasonable the expectation that
the bloc's power {\em always} be greater than or equal to the sum of its members (the superadditivity bloc postulate),
they do not rule out the possibility that, on {\em some} occasions, a bloc's voting power may indeed be greater.

\subsection{Two Subadditivity Blocker Postulates}

We are nevertheless able to specify reasonable, weaker expectations about a bloc's upper bounds,
grounded in the loss-of-freedom and decreasing-marginal-returns rationales, 
under conditions in which the countervailing inefficacious-dissension and/or increasing-marginal-returns rationales are neutralized or overwhelmed.
We shall specify two such weaker postulates, concerning a bloc's upper bounds,
when the bloc contains at least one blocker (or vetoer).
We say that a player $i$ is a {\sc yes}-{\em blocker} if for every $S\in \mathcal{W}$ we have $i\in S$.
Consequently, if $i$ votes {\sc no} then the outcome is {\sc no}. That is, $i$ can veto a {\sc yes}-outcome.
Similarly, a player $j$ is a {\sc no}-{\em blocker} if for every $S\notin \mathcal{W}$ we have $j\in \bar{S}$.
That is, if $j$ votes {\sc yes} then the outcome is {\sc yes} and $j$ can block or veto a {\sc no}-outcome.

Let $\hat{\mathcal{G}}$ be the voting game derived from $\mathcal{G}$ by forming a voting bloc $I\subseteq [n]$.
Then a measure of voting power $\Psi$ satisfies the {\em strong subadditivity blocker postulate} if:
\begin{enumerate}
 \item[{\sc (sbk-1)}] $\hat{\psi}_I \leq \sum_{i\in I} \psi_i$ for any $I$ containing a {\sc yes}-blocker $b$. 
 \item[{\sc (sbk-2)}] $\hat{\psi}_I \leq \sum_{i\in I} \psi_i$ for any $I$ containing a {\sc no}-blocker $b$. 
\end{enumerate}
That is, the a priori voting power of $I$ in $\hat{\mathcal{G}}$ is no greater than
the sum of the a priori voting power of each of its members separately in $\mathcal{G}$, provided $I$ 
contains a {\sc yes}-blocker (or {\sc no}-blocker).

Why would the strong subadditivity blocker postulate be reasonable?
On the one hand, the loss-of-freedom dynamic, which favours subadditivity, is still in place:
since a bloc member $j \in I$ is forced to coordinate its vote with the bloc,
any contributions it could have made in divisions in which members would have voted against each other are excluded;
only the player's contribution when the bloc votes together counts.
But the presence of a blocker $b$ also puts into play the counterveilling inefficacious-dissension dynamic:
if the blocker is a {\sc yes}-blocker, for example,
other would-be bloc members could not have been efficacious on their own in any division in which the blocker voted {\sc no} and they voted against it.
So the loss-of-freedom dynamic on its own is insufficient to secure the subaddivity blocker postulate.
On the other hand, the presence of a blocker $b$ -- let us say a {\sc yes}-blocker --
{\em strengthens} the decreasing-marginal-returns dynamic.
When $j$ transfers its voting power to a bloc that contains a {\sc yes}-blocker,
its presence cannot help increase the bloc's division efficacy score on the {\sc no}-side any further than the score would have been without $j$:
when the bloc votes {\sc no}, the outcome is {\sc no} regardless of whether or not $j$ joins the bloc.
That is, for {\sc (sbk-1)}, the presence of player $j$ can help increase the bloc's voting power
only on the basis of divisions in which the bloc votes together {\bf and} when the {\sc yes}-blocker $b$ votes {\sc yes}.
Thus $j$'s marginal contributions will be even less.
A symmetric argument motivates {\sc (sbk-2)}.

We can further weaken the subadditivity blocker postulate by considering only the case in which {\em every} member of the bloc is a blocker.
A measure of voting power $\Psi$ satisfies the {\em weak subadditivity blocker postulate} if:
\begin{enumerate}
 \item[{\sc (wbk-1)}] $\hat{\psi}_I \leq \sum_{i\in I} \psi_i$ for any $I$ containing only {\sc yes}-blockers. 
 \item[{\sc (wbk-2)}] $\hat{\psi}_I \leq \sum_{i\in I} \psi_i$ for any $I$ containing only {\sc no}-blockers. 
\end{enumerate}
That is, the a priori voting power of $I$ in $\hat{\mathcal{G}}$ is no greater than
the sum of the a priori voting power of each of its members separately in $\mathcal{G}$, 
provided every member of $I$ is a {\sc yes}-blocker (or every member of $I$ is a {\sc no}-blocker).
Observe that {\sc (wbk-1)} is indeed weaker than  {\sc (sbk-1)} because it only need hold in the restricted case 
where the bloc $I$ consists entirely of {\sc yes}-blockers.
For example, the decreasing-marginal-returns rationale may hold only when the added agent 
$j$ is itself a {\sc yes}-blocker (as well as the other members of the bloc) and need not hold when
$j$ is not a blocker.

\subsection{Which Measures Satisfy the Subadditivity Blocker Postulates?}
We next determine whether or not PB, SS and RM satisfy the subadditivity blocker postulates.
(Proofs for all technical theorems in the paper are deferred to the appendix.)

\begin{theorem}\label{thm:PB-sub}
PB fails to satisfy the weak subadditivity blocker postulate (and, thus, fails to satisfy the strong subadditivity blocker postulate).
\end{theorem}

\begin{theorem}\label{thm:SS-sub}
SS satisfies the weak subadditivity blocker postulate but does not satisfy the strong subadditivity blocker postulate.
\end{theorem}

\begin{theorem}\label{thm:RM-sub}
RM satisfies the strong subadditivity blocker postulate (and, thus, satisfies the weak subadditivity blocker postulate).
\end{theorem}

\section{The Minimum-Power Blocker Postulates}\label{sec:BPP}
Felsenthal and Machover (1998: 264) introduce what they call the {\em blocker's share postulate},
which is satisfied by a measure $\Psi$ if the share of any {\sc yes}-blocker's a priori voting power, out of the sum total of all players' a priori voting power,
is at least as great as the reciprocal of the number of players in any {\sc yes}-successful set of voters (and similarly for {\sc no}-blockers):
\begin{enumerate}
\vspace{-.25cm}
 \item[{\sc (bsp-1)}] If $b\in S$ is a {\sc yes}-blocker, then  $\frac{\psi_{b}}{\sum_{i\in [n]} \psi_i} \geq  \frac{1}{|S|}$, for any $S \in \mathcal{W}$.
 \item[{\sc (bsp-2)}] If $b\in \bar{T}$ is a {\sc no}-blocker, then $\frac{\psi_{b}}{\sum_{i\in [n]} \psi_i} \geq  \frac{1}{|\bar{T}|}$, for any $T \notin \mathcal{W}$.
\end{enumerate}
These lower bounds of a blocker's voting power are of course most stringent when $S$ is the smallest possible set of successful {\sc yes}-voters $S^*$,
and $\bar{T}$ is the smallest possible set of successful {\sc no}-voters $\bar{T}^*$.
In effect, the postulate requires that for any {\sc yes}-blocker, its share of a priori voting power, out of the sum of all players' a priori voting power,
be no less than $\frac{1}{|S^*|}$.
%This is equivalent to saying that the {\sc yes}-blocker's relative a prior voting power,
%as measured by a normalized index $\pi^{N}$ such that $\sum{\pi^{N}}=1$,
%is no less than $\frac{1}{|S^*|}$.
Felsenthal and Machover then prove that PB violates the postulate,
i.e., that when PB is normalized such that all players' scores sum to 1,
which yields each player's relative voting power according to what they call the Banzhaf {\em index},
a {\sc yes}-blocker's relative a priori voting power according to the Banzhaf index may be less than $\frac{1}{|S^*|}$,
whereas SS (which is itself already a relative index, since $\sum_{i\in [n]} {\SS}_i=1$) satisfies it.
It can be shown that RM, like PB, also violates the blocker's share postulate.

Does this speak against PB and RM? It does not:
the postulate is unmotivated for voting power.
The intuition behind the postulate does not concern voting power as such but, rather, its expected {\em value}
(as is implicit to Felsenthal and Machover's justification for the postulate).%
\footnote{Felsenthal and Machover put it in terms of their distinction between I-power and P-power.}
Consider a non-voter for whom the value of a {\sc yes}-outcome is equal to $f>0$,
and who would therefore be willing to spend up to $f$ to buy the votes of a set of voters capable of ensuring a {\sc yes}-outcome.
This set must include any {\sc yes}-blocker, if there is one.
Assume the voting structure has at least one {\sc yes}-blocker,
the non-voter knows the voting structure,
but has no information about the distribution of player preferences (modelling the a priori case).
Without information about voter inclinations,
the most efficient strategy is to bribe the smallest possible set of successful {\sc yes}-voters $S^*$,
to minimize the total bribe necessary to secure the desired outcome.
What is the (subjective) expected value, to the non-voter, of the {\sc yes}-blocker's vote,
i.e., the value of bribing the {\sc yes}-blocker rather other players to realize the desired {\sc yes}-outcome?
If every member of $S^*$ is a {\sc yes}-blocker,
then the expected value of any {\sc yes}-blocker's vote to the non-voter will be equal to that of any other {\sc yes}-blocker,
which implies the non-voter would be willing to offer each {\sc yes}-blocker a bribe of up to $\frac{1}{|S^*|}\cdot f$.
If, by contrast, not all members of $S^*$ are {\sc yes}-blockers,
then the smallest possible set of successful {\sc yes}-voters will not be unique,
i.e., there will more than one such minimal set of voters.
And since only a {\sc yes}-blocker will be a member of every such minimal set,
the expected value of its vote to the non-voter will at least as great as that of any other potential member of these minimal sets.
Therefore, $\frac{1}{|S^*|}\cdot f$ is the minimum value of a {\sc yes}-blocker's vote to the non-voter,
and $\frac{1}{|S^*|}$ its minimum {\em relative} value.
And this is precisely what the blocker's share postulate says:
that the relative {\em value} of a {\sc yes}-blocker's a priori voting power should be at least $\frac{1}{|S^*|}$.
We therefore conclude that the blocker's share postulate is appropriate not for measures of a priori {\em voting power},
but, rather, for measures of the expected {\em value} of a player's a priori voting power.
(The fact that SS satisfies the blocker's share postulate provides support,
in other words, for the view that, considered as an a priori index, it is best interpreted,
not as an index of relative a priori voting power,
but, rather, as an index of a player's expected payoff assuming a cooperative game with transferable utility).%
\footnote{On SS as a bribe index, see Morriss (2002);
on the equivalent expected payoff interpretation, see Felsenthal and Machover (1998).}
 
It is possible, however, to reformulate the postulate in a way that would be appropriate for measures of voting power,
by focussing on a player's a priori power itself, rather than its share of overall power.
The key is to compare its a priori power against the voting power
that a measure of voting power would assign to a dictator in a dictator-rule SVG.
To formalize this, let $S$ be a {\sc yes}-successful set in $\mathcal{G}$ ($S \in \mathcal{W}$),
$\bar{T} = [n] \setminus T$ be a {\sc no}-successful set in $\mathcal{G}$ ($T \notin \mathcal{W}$),
and $\psi_d$ be the voting power of player $d$ in $\mathcal{G'}$,
where $\mathcal{G'}$ is any dictator-rule SVG and $d$ is the dictator.

A measure of voting power $\Psi$ satisfies the {\em strong minimum-power blocker postulate} if:
\begin{enumerate}
 \item[{\sc (smp-1)}] If $b\in S$ is a {\sc yes}-blocker, then $\psi_{b} \geq  \frac{\psi_d}{|S|}$, for any $S \in \mathcal{W}$.
 \item[{\sc (smp-2)}] If $b\in \bar{T}$ is a {\sc no}-blocker, then $\psi_{b} \geq  \frac{\psi_d}{|\bar{T}|}$, for any $T \notin \mathcal{W}$.
\end{enumerate}

A measure of voting power $\Psi$ satisfies the {\em weak minimum-power blocker postulate} if:
\begin{enumerate}
 \item[{\sc (wmp-1)}] If every voter in $S$ is a {\sc yes}-blocker, then $\psi_{b} \geq  \frac{\psi_d}{|S|}$, for all $b\in S$, for any $S \in \mathcal{W}$.
 \item[{\sc (wmp-2)}] If every voter in $\bar{T}$ is a {\sc no}-blocker, then $\psi_{b} \geq  \frac{\psi_d}{|\bar{T}|}$, for all $b\in \bar{T}$, for any $T \notin \mathcal{W}$.
\end{enumerate}

The strong minimum-power blocker postulate obviously implies the weak minimum-power blocker postulate.
We remark that $\psi_d=1$ for a dictator $d$ for each of the three voting measures studied in this paper.
In fact, multiplying $\psi_i(\mathbb{S})$ by a fixed scalar for each voter $i$ and each subset $S$
has no effect on the structure of the voting game nor on the satisfaction of any of the postulates.
The fact that $\psi_d=1$ for each of three voting measures essentially means their scaling factors are all identical.

To understand the minimum-power blocker postulates, let's begin with a sanity check.
Suppose $S=\{b\}$ with cardinality $1$. If $\{b\}$ is a {\sc yes}-successful set {\bf and} $b$ is a {\sc yes}-blocker,
then a division $\mathbb{R}=(R, \bar{R})$ is {\sc yes}-successful if and only if $b\in R$.
That is, $b$ is a dictator! It follows that $\psi_{b} = \psi_d \ge \frac{\psi_d}{1}$ and the weak minimum-power blocker postulate {\sc (wmp-1)}
holds for $S=\{b\}$. A similar argument applies for {\sc (wmp-2)}.

Suppose, by contrast, that $S$ is a {\sc yes}-successful set containing more than one player, but each is a {\sc yes}-blocker.
Then $\mathbb{R}=(R, \bar{R})$ is {\sc yes}-successful if and only if $S\subseteq R$.
Thus, collectively $S$ is a dictatorship. 
%Moreover every voter $i\notin S$ is a dummy and so hasnvoting power equal to zero (by the dummy postulate).
But, in addition, each voter $b\in S$ has the individual power to veto any {\sc yes}-outcome.
Thus the agents in $S$ have the option of choosing to act collectively as dictator,
and individually each must be in any {\sc yes}-successful set.
It is thus reasonable to expect the voting power of any member of $S$ does 
satisfy $\psi_{b} \geq  \frac{\psi_d}{|S|}$ and the weak minimum-power blocker postulate holds.

A similar argument applies for the strong minimum-power blocker postulate.
The agents in $S$ can choose to act collectively as a dictator. But since $S$ may contain non-blockers, this
is only because a {\sc yes}-blocker $b$ has the power to veto any {\sc yes}-outcome. Thus $b$ should have power at least as large as any other 
member of $S$ and, specifically, at least as the large as the average, and hence $\psi_{b} \geq  \frac{\psi_d}{|S|}$.

\subsection{Which Measures Satisfy the Minimum-Power Blocker Postulates?}
We next determine whether PB, SS and RM satisfy the minimum-power blocker postulates.

\begin{theorem}\label{thm:PB-smp}
PB does not satisfy the weak minimum-power blocker postulate (and, thus, does not satisfy the strong minimum-power blocker postulate).
\end{theorem}

\begin{theorem}\label{thm:SS-smp}
SS satisfies the strong minimum-power blocker postulate (and, thus, satisfies the weak minimum-power blocker postulate).
\end{theorem}

\begin{theorem}\label{thm:RM-smp}
RM satisfies the strong minimum-power blocker postulate (and, thus, satisfies the weak minimum-power blocker postulate).
\end{theorem}

\section{The Added-Blocker Postulate}\label{sec:blocker-postulates}
We conclude with the added-blocker postulate, which concerns changes to other players' a priori voting power when a blocker is added to a game.

The first step is to formulate an added-blocker postulate that is appropriate for a priori voting power in general.
Given a game $\mathcal{G}=([n], \mathcal{W})$, let $\mathcal{G}^Y=([n]\cup \{0\}, \mathcal{W}^Y)$ be the game resulting 
from adding an {\em added {\sc yes}-blocker}, i.e., 
a new player $0$ that is a {\sc yes}-blocker but who otherwise does not affect the original voting structure. 
Specifically, $\mathcal{W}^{Y}=\{S\cup\{0\}: \forall S\in  \mathcal{W}\}$.
Similarly, let $\mathcal{G}^N=([n]\cup \{0\}, \mathcal{W}^N)$ be the game resulting from adding an {\em added {\sc no}-blocker}~$0$.
Specifically, $\mathcal{W}^{N}=\{S\cup\{0\}: \forall S\}\cup \{S: \forall S\in \mathcal{W}\}$.

Felsenthal and Machover (1988: 266-275) argue that any reasonable measure of voting power $\Psi$ must satisfy the 
following postulate for a priori voting power. For any pair of players $i$ and $j$,

\begin{enumerate}
\vspace{-.25cm}
 \item[{\sc (add-0)}] ~$\frac{\psi_i(\mathcal{G})}{\psi_j(\mathcal{G})} = \frac{\psi_i(\mathcal{G}^Y)}{\psi_j(\mathcal{G}^Y)}$
\end{enumerate}
That is, the relative measures of a priori voting power for $i$ and $j$ should be unaffected by an added {\sc yes}-blocker.
They argue ``there is nothing at all to imply that the addition of the new'' {\sc yes}-blocker ``is of greater relative advantage 
to some of the voters'' of the original game than to others,
because there is ``no reasonable mechanism that would create a {\em differential} effect'' (Felsenthal and Machover 1998: 267).
They then show that PB satisfies this postulate, but SS violates it, and, on this basis,
conclude that the latter cannot be considered a valid index of a priori voting power.%
\footnote{Where voting power is understood as the capacity to influence voting outcomes (Felsenthal and Machover 1998: 267-275).}

But there is a problem: their specification is asymmetric between {\sc yes}-voting and {\sc no}-voting power.\footnote{A similar issue 
arises with the {\em bicameral postulate} (Felsenthal et al. 1998).}
Contrary to their assertion,
in general we do have good reasons to expect an added {\sc yes}-blocker sometimes to have differential relative impact on players' a priori 
voting power as a whole,
depending on the relative importance, to each player's total voting power, of its {\sc yes}- as opposed to {\sc no}-voting power.
This is because an added {\sc yes}-blocker may diminish the relative significance or share of {\sc yes}-voting power within a player's 
total voting power.

We should expect this potential asymmetry between {\sc yes}- and {\sc no}-voting power to be neutralized only for measures of voting power that, 
like PB and SS, give a positive efficacy score only in cases of (full) decisiveness.
Such measures, by ignoring partial efficacy, effectively render a player's a priori {\sc yes}- and {\sc no}-voting power perfectly symmetrical, 
that is, $\psi_i^+ = \psi_i^-$:
any player that is {\sc yes}-decisive in a winning division will also be {\sc no}-decisive in the corresponding losing division
in which the only difference is that player's vote.
By contrast, this symmetry between {\sc yes}- and {\sc no}-voting power will not hold for measures that, like RM, take 
degrees of efficacy into account.
A player that is only partially efficacious in a winning division will not be efficacious at all in the corresponding division in which all 
other players' votes are held constant,
because, not being (fully) decisive, the player's switch from {\sc yes} to {\sc no} will not change the outcome --
which switches the player from successful in one division to unsuccessful in the other.
RM is not, in other words, strategy symmetric.
The implication is that, if a player's total voting power relies more heavily on partial efficacy in winning divisions than does that of another player,
then an added {\sc yes}-blocker may have a disproportionately negative impact on the former than on the latter.
We therefore have no reason in general to expect an added {\sc yes}-blocker never to result in some players' relative advantage.

By contrast,
we have every reason to expect that an added {\sc yes}-blocker will be of no relative advantage to players' a priori {\em {\sc yes}-voting power}
and that an added {\sc no}-blocker will be of no relative advantage to players' a priori {\em {\sc no}-voting power} in particular.
This induces an easy fix to Felsenthal and Machover's proposal,
so as to yield a postulate appropriate for all efficacy measures in general (and not just decisiveness measures).
We simply reformulate their postulate to distinguish between {\sc yes}-voting power and {\sc no}-voting power.
Accordingly, we say that a measure of voting power $\Psi$ satisfies the {\em added-blocker postulate} if, for any pair of players $i$ and~$j$, the following 
conditions hold for a priori voting power:

\begin{enumerate}
\vspace{-.25cm}
 \item[{\sc (add-1)}] ~$\frac{\psi^+_i(\mathcal{G})}{\psi^+_j(\mathcal{G})} = \frac{\psi^+_i(\mathcal{G}^Y)}{\psi^+_j(\mathcal{G}^Y)}$, and
 \item[{\sc (add-2)}] ~$\frac{\psi^-_i(\mathcal{G})}{\psi^-_j(\mathcal{G})} = \frac{\psi^-_i(\mathcal{G}^N)}{\psi^-_j(\mathcal{G}^N)}$
\end{enumerate}

{\sc (add-1)} says that the relative measures of a priori {\sc yes}-voting power are unaffected by an added {\sc yes}-blocker.
Hence, we call this element of the postulate the {\em added-{\sc yes}-blocker postulate} for any pair of players $i$ and $j$.
{\sc (add-2)} says the relative measures of a priori {\sc no}-voting power are unaffected by an added {\sc no}-blocker.
We similarly call this element the {\em added-{\sc no}-blocker postulate}.

\subsection{Which Measures Satisfy the Added-Blocker Postulate?}
To finish we determine whether or not PB, SS and RM satisfy the added-blocker postulate.

\begin{theorem}\label{thm:added-PB}
PB satisfies the added-blocker postulate.
\end{theorem}

\begin{theorem}\label{thm:added-SS}
SS does not satisfy the added-blocker postulate.
\end{theorem}

\begin{theorem}\label{thm:added-RM}
RM satisfies the added-blocker postulate.
\end{theorem}

\section{Conclusion}\label{sec:conc}

We have specified and motivated five reasonable postulates about a priori voting power that a measure of voting power should satisfy:
the strong and weak subadditivity blocker postulates, the strong and weak minimum-power blocker postulates,
and the added-blocker postulate.
We further showed that the classic Penrose-Banzhaf measure violates the two subadditivity blocker postulates and the two minimum-power blocker postulates,
while the classic Shapley-Shubik index violates the strong subadditivity blocker postulate and the added-blocker postulate.
These violations weaken the plausibility of PB and SS as measures of voting power.
By contrast, the Recursive Measure, alone amongst the three measures studied here,
withstands full scrutiny: it satisfies all five postulates.
We take this finding considerably to buttress the plausibility of RM as a measure of voting power.

\section{Appendix of Proofs}\label{sec:appendix}

\subsection*{Proofs for Section~\ref{sec:sub-postulate}}

\restatetheorem{thm:PB-sub}
PB fails to satisfy the weak subadditivity blocker postulate (and, thus, fails to satisfy the strong subadditivity blocker postulate).
\end{theorem}
\begin{proof}
Let $\mathcal{G}$ be a three-player unanimity-rule voting game,
and let $\hat{\mathcal{G}}$ be the game derived from $\mathcal{G}$ when all three players form a unanimous bloc $I$.
By the unanimity-rule, all three players are {\sc yes}-blockers in $\mathcal{G}$.
Now PB gives a value to each player of $\frac{1}{4}$ and thus $\sum_{i=1}^3 PB_i = \frac{3}{4}$.
But $I$ is a dictator in $\hat{\mathcal{G}}$, so $\hat{PB}_I = 1 > \frac{3}{4}$, in violation of each subadditivity blocker postulate.
\end{proof}

\restatetheorem{thm:SS-sub}
SS satisfies the weak subadditivity blocker postulate but does not satisfy the strong subadditivity blocker postulate.
\end{theorem}
\begin{proof}
First we show SS does not satisfy the strong subadditivity blocker postulate via a counterexample.
Consider the weighted voting game $\mathcal{G}$=$\{3: 2,1,1\}$. 
Observe that voter $1$ is a {\sc yes}-blocker since voters $2$ and $3$ have voting weight $1+1$ which is smaller than the quota of $3$.
Here $SS_1 = \frac{2}{3}$ and $SS_2=SS_3=\frac{1}{6}$.
Now let $\hat{\mathcal{G}}$ be the game derived from $\mathcal{G}$ in which the first two players form a bloc $I=\{1,2\}$.
It follows that $I$ is a dictator in $\hat{\mathcal{G}}$, so $\hat{SS}_I = 1 
> \frac{2}{3}+\frac{1}{6}=\frac{5}{6}$, in violation of the postulate.

Second, we prove SS does satisfy the weak subadditivity blocker postulate.
It suffices to show {\sc (wbk-1)} holds.
Assume $I=\{i,j\}$. Let {\sc yes}-blocker $j$ donate to {\sc yes}-blocker $i$. Then $j$ becomes a dummy.
Take an ordering $\sigma$ of the agents. We have three cases.\\
(i) Let $j$ be the pivotal voter in the ordering $\sigma$ for $\mathcal{G}$.
Let $S$ be the set of agents before $j$ in the 
ordering. Since $j$ is decisive in $S\cup \{j\}$ in $\mathcal{G}$  but is a dummy in $\hat{\mathcal{G}}$, 
it must be the case that $S$ is a {\sc yes}-successful set in $\hat{\mathcal{G}}$.
In particular, $i$ must appear before $j$ in $\sigma$ since $i$ is a {\sc yes}-blocker.
Thus $i\in S$.
Since $S$ is {\sc yes}-successful in $\hat{\mathcal{G}}$ there must be some agent in $S$ that is now decisive.
This may or may not be agent~$i$.\\
(ii) Let $i$ be the pivotal voter in the ordering $\sigma$ for $\mathcal{G}$.
Let $S$ be the set of agents before $i$ in the ordering.
So $S$ is {\sc yes}-unsuccessful and $S\cup \{i\}$ is {\sc yes}-successful in $\mathcal{G}$.
This must still be the case after the donation from $j$ to $i$. Thus $i$ remains the pivotal voter in the ordering $\sigma$
for $\hat{\mathcal{G}}$.\\
(iii) Let $\ell\neq i,j$ be the pivotal voter in the ordering $\sigma$ for $\mathcal{G}$.
Let $S$ be the set of agents before $\ell$ in the ordering.
Since $S\cup \{\ell\}$ is {\sc yes}-successful in $\mathcal{G}$,
it must be the case that $i$ and $j$ are in $S$ since they are {\em both} {\sc yes}-blockers.
(Note, this is where the distinction between the weak and strong postulates is important.)
But then, by definition, $\mathbb{S}$ and $\mathbb{S}\cup \mathbbm{\{\ell\}}$ have the same outcomes in $\mathcal{G}$ and $\hat{\mathcal{G}}$.
Thus $\ell$ remains pivotal in the ordering $\sigma$ for $\hat{\mathcal{G}}$.\\
It immediately follows that $\hat{\psi}_I \leq  \psi_i+\psi_j$. 
Iterating this argument, we have that $\hat{\psi}_I \leq \sum_{i\in I} \psi_i$ for
any $I$ consisting only of three or more blockers.
\end{proof}

\restatetheorem{thm:RM-sub}
RM satisfies the strong subadditivity blocker postulate (and, thus, satisfies the weak subadditivity blocker postulate).
\end{theorem}

The crux to proving Theorem~\ref{thm:RM-sub} is the following lemma.
\begin{lemma}\label{lem:crux}
Let $j$ fully donate to $i$.
If $j$ is a {\sc yes}-blocker (or a  {\sc no}-blocker), then $\hat{RM'}_i\le RM'_i+RM'_j$.
\end{lemma}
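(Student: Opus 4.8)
The plan is to reduce the two-player donation to a single-player game and then compare voting power separately on the {\sc yes}-side and the {\sc no}-side. First I would invoke the dummy postulate to delete the donor: since $j$ becomes a dummy in $\hat{\mathcal{G}}$, the value $\hat{RM'}_i$ is unchanged if we pass to the game $\mathcal{G}'$ on $[n]\setminus\{j\}$ in which $i$'s vote controls the former $\{i,j\}$ positions, i.e.\ a set $C$ with $i\in C$ lies in $\mathcal{W}'$ iff $C\cup\{j\}\in\mathcal{W}$, and a set $C$ with $i\notin C$ lies in $\mathcal{W}'$ iff $C\in\mathcal{W}$. Because $j$ is a {\sc yes}-blocker, no set omitting $j$ is in $\mathcal{W}$, so no set omitting $i$ is in $\mathcal{W}'$; hence $i$ is itself a {\sc yes}-blocker in $\mathcal{G}'$. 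I would carry out the {\sc yes}-blocker case in full; the {\sc no}-blocker case then follows by the symmetry of RM under exchanging {\sc yes}/{\sc no} and winning/losing.

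I would next group all divisions by the vote configuration $R\subseteq[n]\setminus\{i,j\}$ of the outside players, write $RM'_i=RM'^+_i+RM'^-_i$ (and likewise for $j$ and for $\hat{RM'}_i$), and prove $\hat{RM'}^+_i\le RM'^+_i+RM'^+_j$ and $\hat{RM'}^-_i\le RM'^-_i+RM'^-_j$ separately; adding them gives the lemma. The {\sc yes}-side is easy and holds configuration-by-configuration. Since no set containing $i$ but not $j$ is winning, $i$'s efficacy in $\mathcal{G}'$ when it votes {\sc yes} equals $1$ exactly when $R\cup\{i,j\}\in\mathcal{W}$ and is $0$ otherwise. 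Whenever it equals $1$, monotonicity and the blocker property force $j$ to be {\sc yes}-decisive in $\mathcal{G}$ at $R\cup\{i,j\}$ (and, if $R\cup\{j\}\notin\mathcal{W}$, $i$ to be {\sc yes}-decisive there too), so the per-configuration inequality $2\,\alpha'_i(R\cup\{i\})\le \alpha_i(R\cup\{i,j\})+\alpha_j(R\cup\{j\})+\alpha_j(R\cup\{i,j\})$ holds with the decisive $1$'s matching and the surviving recursive terms only adding slack.

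The {\sc no}-side is the heart of the proof, where I would establish $2\,\alpha'_i(R)\le \alpha_i(R)+\alpha_i(R\cup\{j\})+\alpha_j(R)+\alpha_j(R\cup\{i\})$. When $R\cup\{i,j\}\in\mathcal{W}$ the left side is $2$, but the blocker property makes at least two right-hand scores equal $1$ (if $R\cup\{j\}\notin\mathcal{W}$ then $\alpha_i(R\cup\{j\})=\alpha_j(R\cup\{i\})=1$; if $R\cup\{j\}\in\mathcal{W}$ then $\alpha_j(R)=\alpha_j(R\cup\{i\})=1$), so the inequality is immediate. The genuinely hard case is $R\cup\{i,j\}\notin\mathcal{W}$, where all four scores are partial {\sc no}-efficacies defined recursively over losing divisions by flipping {\sc no}-voters to {\sc yes}. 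I would handle this by coupling the ``peeling'' recursion defining $\alpha'_i(R)$ in $\mathcal{G}'$ (flipping $i$ and the outside voters) with the two recursions in $\mathcal{G}$ defining $i$'s and $j$'s {\sc no}-efficacy, using a shared random flip order. The structural fact driving the coupling is that $i$ is {\sc no}-decisive in $\mathcal{G}'$ at an outside set $C$ exactly when $C\cup\{i,j\}\in\mathcal{W}$, which is precisely the event that makes $j$ {\sc no}-decisive in $\mathcal{G}$ one flip later (at $C\cup\{i\}$) and, when $C\cup\{j\}\notin\mathcal{W}$, makes $i$ {\sc no}-decisive in $\mathcal{G}$ (at $C\cup\{j\}$); the factor $2$ records that $i$'s decisiveness mass in $\mathcal{G}'$ is split between $i$'s and $j$'s contributions in $\mathcal{G}$, with the blocker guaranteeing that $j$ absorbs whatever $i$ cannot.

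I expect this coupling to be the main obstacle, for two reasons. First, the two processes live on games with different player sets, so the loyal-children denominators $|LC(\mathbb{S})|$ differ (by one extra {\sc no}-voter, namely $j$, in $\mathcal{G}$), and one must show this mismatch does not spoil the inequality. Second, the recursion links different outside configurations $R$ (each flip of an outside voter moves to a new $R$), so a strict configuration-by-configuration argument may not close; instead I anticipate proving the coupled inequality by induction on the number of {\sc no}-voters in the division (equivalently, on the depth of the peeling recursion), carrying the per-division inequality as the inductive hypothesis. Once the {\sc no}-side inequality is secured, summing it with the {\sc yes}-side inequality yields $\hat{RM'}_i\le RM'_i+RM'_j$, and the {\sc no}-blocker case follows by symmetry.
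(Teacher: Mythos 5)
Your setup is sound and matches the paper's in spirit: grouping divisions by the outside configuration, splitting into {\sc yes}- and {\sc no}-sides, and disposing of every case in which $R\cup\{i,j\}\in\mathcal{W}$ by exhibiting enough decisive $1$'s on the right-hand side. Your structural observations are also correct (in particular that $i$ is {\sc no}-decisive in the reduced game at $C$ exactly when $C\cup\{i,j\}\in\mathcal{W}$, and that the blocker property forces either $\alpha^-_i(R\cup\{j\})=\alpha^-_j(R\cup\{i\})=1$ or $\alpha^-_j(R)=\alpha^-_j(R\cup\{i\})=1$). But the proof has a genuine gap at exactly the point you flag: the case $R\cup\{i,j\}\notin\mathcal{W}$, where all four divisions lose and every term is a partial {\sc no}-efficacy. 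There you give only a plan --- a ``coupling'' of the peeling recursions with a shared flip order, to be established by induction on the number of {\sc no}-voters --- together with two acknowledged obstacles (mismatched loyal-children denominators across games with different player sets, and the fact that the recursion mixes different outside configurations so the inequality cannot close configuration-by-configuration). These obstacles are real and are not resolved; since this case is the entire content of the lemma beyond bookkeeping, the argument is incomplete as written.

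The paper closes this case with a concrete tool you do not invoke: the path-counting characterization of RM's {\sc no}-efficacy, namely that $\alpha^-_i(\mathbb{S})$ equals the fraction of monotone paths from $\mathbb{S}$ to $[\mathbbm{n}]$ in the division lattice that pass through a division at which $i$ is {\sc no}-decisive. Crucially, the paper does \emph{not} delete the dummy $j$: it keeps $\hat{\mathcal{G}}$ on the same player set $[n]$, so that $\hat{\mathcal{G}}$ and $\mathcal{G}$ share one lattice and paths can be compared one-for-one. It then proves the two inequalities $\hat{\alpha}^-_i(\mathbb{S})\le\alpha^-_i(\mathbb{S})+\alpha^-_j(\mathbb{S})$ and $\hat{\alpha}^-_i(\mathbb{S}\cup\mathbbm{j})\le\alpha^-_i(\mathbb{S}\cup\mathbbm{j})+\alpha^-_j(\mathbb{S}\cup\mathbbm{i})$ by pairing each path $P$ that contributes to the left-hand side with a ``mirror'' or ``twin'' path $P^M$ (swap the positions of $i$ and $j$ along $P$) and checking that $P$ and $P^M$ jointly contribute at least as much to the right-hand side, using the blocker property to locate a division on $P^M$ where $j$ is {\sc no}-decisive. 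If you want to complete your argument, the most direct repair is to abandon the reduction to $[n]\setminus\{j\}$ for this case (which is what creates the denominator mismatch), work with $j$ retained as a dummy, and replace the proposed induction with this path-pairing argument.
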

\begin{proof}
Observe that
\begin{align*}
RM'_i &\ =\ \frac{1}{2^n}\cdot \sum_{\mathbb{S}\in \mathcal{D}: i,j\notin S} \left( \alpha_i(\mathbb{S})+ \alpha_i(\mathbb{S}\cup \mathbbm{j})+ \alpha_i(\mathbb{S}\cup \mathbbm{i}) + \alpha_i(\mathbb{S}\cup \mathbbm{\{i,j\}}) \right)\\
&\ =\ \frac{1}{2^n}\cdot \sum_{\mathbb{S}\in \mathcal{D}: i,j\notin S} \left( \alpha^-_i(\mathbb{S})+ \alpha^-_i(\mathbb{S}\cup \mathbbm{j})+ \alpha^+_i(\mathbb{S}\cup \mathbbm{i}) + \alpha^+_i(\mathbb{S}\cup \mathbbm{\{i,j\}}) \right)
\end{align*}
Thus to prove $\hat{RM'}_i\le RM'_i+RM'_j$ it suffices to show that
\begin{eqnarray*}
\lefteqn{ \hat{\alpha}^-_i(\mathbb{S})+ \hat{\alpha}^-_i(\mathbb{S}\cup \mathbbm{j})+ \hat{\alpha}^+_i(\mathbb{S}\cup \mathbbm{i}) + \hat{\alpha}^+_i(\mathbb{S}\cup \mathbbm{\{i,j\}}) }\\
&\le&
\left( \alpha^-_i(\mathbb{S})+  \alpha^-_j(\mathbb{S})  \right)+
\left(  \alpha^-_i(\mathbb{S}\cup \mathbbm{j}) + \alpha^+_j(\mathbb{S}\cup \mathbbm{j}) \right) +
\left( \alpha^+_i(\mathbb{S}\cup \mathbbm{i}) + \alpha^-_j(\mathbb{S}\cup \mathbbm{i}) \right) +
\left( \alpha^+_i(\mathbb{S}\cup \mathbbm{\{i,j\}}) + \alpha^+_j(\mathbb{S}\cup \mathbbm{\{i,j\}}) \right)
\end{eqnarray*}
Take any $S$ containing neither $i$ nor $j$. 
Since $j$ is a {\sc yes}-blocker,  $\mathbb{S}$ and $\mathbb{S}\cup \mathbbm{i}$ lose in the original game.
Furthermore, by monotonicity, in the original game either:\\
(i) $\mathbb{S}\cup \mathbbm{j}$ and $\mathbb{S}\cup \mathbbm{\{i,j\}}$ both win, or\\
(ii) $\mathbb{S}\cup \mathbbm{j}$ loses and $\mathbb{S}\cup \mathbbm{\{i,j\}}$ wins, or\\
(iii) $\mathbb{S}\cup \mathbbm{j}$ and $\mathbb{S}\cup \mathbbm{\{i,j\}}$ both lose.\\
Thus there are three cases.
In the modified game these three cases imply:\\
(i) $\mathbb{S}\cup \mathbbm{i}$ and $\mathbb{S}\cup \mathbbm{\{i,j\}}$ both win but $\mathbb{S}$ and $\mathbb{S}\cup \mathbbm{j}$ both lose.\\
(ii) $\mathbb{S}\cup \mathbbm{i}$ and $\mathbb{S}\cup \mathbbm{\{i,j\}}$ both win but $\mathbb{S}$ and $\mathbb{S}\cup \mathbbm{j}$ both lose.\\
(iii) $\mathbb{S}$, $\mathbb{S}\cup \mathbbm{j}$, $\mathbb{S}\cup \mathbbm{i}$ and $\mathbb{S}\cup \mathbbm{\{i,j\}}$ all lose.\\
The first two cases are easier to deal with.
In Case (i) observe that $j$ is {\sc yes}-decisive at both $\mathbb{S}\cup \mathbbm{j}$ and $\mathbb{S}\cup \mathbbm{\{i,j\}}$
and {\sc no}-decisive at both $\mathbb{S}$ and $\mathbb{S}\cup \mathbbm{i}$. Thus
\begin{align*}
 \left(\alpha^-_i(\mathbb{S})+  \alpha^-_j(\mathbb{S})  \right) &+
\left(  \alpha^-_i(\mathbb{S}\cup \mathbbm{j}) + \alpha^+_j(\mathbb{S}\cup \mathbbm{j}) \right) +
\left( \alpha^+_i(\mathbb{S}\cup \mathbbm{i}) + \alpha^-_j(\mathbb{S}\cup \mathbbm{i}) \right) +
\left( \alpha^+_i(\mathbb{S}\cup \mathbbm{\{i,j\}}) + \alpha^+_j(\mathbb{S}\cup \mathbbm{\{i,j\}}) \right)  \\
&\ge \left( 0+ 1 \right)+ \left( 0+ 1 \right) + \left(0+ 1 \right) + \left(0+ 1) \right)\\
&\ge 4\\
&\ge
\hat{\alpha}^-_i(\mathbb{S})+ \hat{\alpha}^-_i(\mathbb{S}\cup \mathbbm{j})+ \hat{\alpha}^+_i(\mathbb{S}\cup \mathbbm{i}) 
+ \hat{\alpha}^+_i(\mathbb{S}\cup \mathbbm{\{i,j\}}) \end{align*}

Similarly, in Case (ii) observe that both $i$ and $j$ are {\sc yes}-decisive at $\mathbb{S}\cup \mathbbm{\{i,j\}}$.
Furthermore, $i$ is {\sc no}-decisive at $\mathbb{S}\cup \mathbbm{j}$ and 
$j$ is {\sc no}-decisive at $\mathbb{S}\cup \mathbbm{i}$. Thus
\begin{align*}
 \left( \alpha^-_i(\mathbb{S})+  \alpha^-_j(\mathbb{S})  \right) &+
\left(  \alpha^-_i(\mathbb{S}\cup \mathbbm{j}) + \alpha^+_j(\mathbb{S}\cup \mathbbm{j}) \right) +
\left( \alpha^+_i(\mathbb{S}\cup \mathbbm{i}) + \alpha^-_j(\mathbb{S}\cup \mathbbm{i}) \right) +
\left( \alpha^+_i(\mathbb{S}\cup \mathbbm{\{i,j\}}) + \alpha^+_j(\mathbb{S}\cup \mathbbm{\{i,j\}}) \right)  \\
&\ge \left( 0+ 0 \right)+ \left( 1+ 0 \right) + \left(0+ 1 \right) + \left(1+ 1) \right)\\
&\ge 4\\
&\ge
\hat{\alpha}^-_i(\mathbb{S})+ \hat{\alpha}^-_i(\mathbb{S}\cup \mathbbm{j})+ \hat{\alpha}^+_i(\mathbb{S}\cup \mathbbm{i}) 
+ \hat{\alpha}^+_i(\mathbb{S}\cup \mathbbm{\{i,j\}}) 
\end{align*}
Case (iii) is more complex.
Recall that in this case 
$\mathbb{S}$, $\mathbb{S}\cup \mathbbm{j}$, $\mathbb{S}\cup \mathbbm{i}$ and $\mathbb{S}\cup \mathbbm{\{i,j\}}$ all lose
in both the original game and the modified game. This implies that the {\sc yes}-efficacy score of each voter is zero
at these divisions. Thus it suffices to prove that 
\begin{eqnarray}
\hat{ \alpha}^-_i(\mathbb{S})+ \hat{\alpha}^-_i(\mathbb{S}\cup \mathbbm{j}) 
&\le&
\alpha^-_i(\mathbb{S})+  \alpha^-_j(\mathbb{S})  + \alpha^-_i(\mathbb{S}\cup \mathbbm{j}) + \alpha^-_j(\mathbb{S}\cup \mathbbm{i}) 
\end{eqnarray}
In fact, we will prove something stronger. The following two inequalities hold.
\begin{eqnarray}
\hat{ \alpha}^-_i(\mathbb{S})
&\le&
\alpha^-_i(\mathbb{S})+  \alpha^-_j(\mathbb{S})  \label{eq:suff-1}\\
\hat{\alpha}^-_i(\mathbb{S}\cup \mathbbm{j}) 
&\le& \alpha^-_i(\mathbb{S}\cup \mathbbm{j}) + \alpha^-_j(\mathbb{S}\cup \mathbbm{i}) \label{eq:suff-2}
\end{eqnarray}

To show this we need the following important fact from Abizadeh and Vetta (2021).
A {\sc no}-efficacy score $\alpha^-_i(\mathbb{S})$ for RM can be calculated by considering paths 
in the division lattice from $\mathbb{S}$ to $[\mathbbm{n}]$. 
The division lattice contains a node for each division $\mathbb{S}$. There is an arc in the lattice from
$\mathbb{S}$ to $\mathbb{S}\cup \mathbbm{j}$, for each $j\in [n]\setminus S$.
Then the {\sc no}-efficacy score $\alpha^-_i(\mathbb{S})$
is the fraction of paths from $\mathbb{S}$ to $[\mathbbm{n}]$ that contains a division at which 
$i$ is {\sc no}-decisive.

Using this fact we proceed to prove (\ref{eq:suff-1}). 
\begin{claim}\label{cl:1}
$\hat{ \alpha}^-_i(\mathbb{S}) \le \alpha^-_i(\mathbb{S})+  \alpha^-_j(\mathbb{S})$
\end{claim}
\begin{proof}
Take any path $P$ from $\mathbb{S}$ to $[\mathbbm{n}]$
that contains a division at which $i$ is {\sc no}-decisive {\em in the modified game} $\hat{\mathcal{G}}$.
Thus $P$ contributes to $\hat{ \alpha}^-_i(\mathbb{S})$. We wish to find a matching contribution to 
$\alpha^-_i(\mathbb{S})+  \alpha^-_j(\mathbb{S})$.
Again we break the analysis into cases.
\begin{enumerate}
\item $i$ is {\sc no}-decisive at $\mathbb{T}$ on path $P$ in $\hat{\mathcal{G}}$\\
By monotonicity, $i$ is {\sc no}-decisive at the highest {\sc no}-division on $P$ in $\hat{\mathcal{G}}$.
Thus, we may assume $\mathbb{T}\cup \mathbbm{k}$ on path $P$ is a {\sc yes}-division in $\hat{\mathcal{G}}$.
But $i$ is a {\sc yes}-blocker in $\hat{\mathcal{G}}$ because $j$ is a {\sc yes}-blocker in ${\mathcal{G}}$.
Hence, it must be that case that $k=i$. 

Now $\mathbb{T}\cup \mathbbm{i}$ is losing in ${\mathcal{G}}$ and $j$ is a {\sc yes}-blocker in ${\mathcal{G}}$.
But $\mathbb{T}\cup \mathbbm{i}$ is winning in $\hat{\mathcal{G}}$.
So, by definition, $\mathbb{T}\cup \mathbbm{\{i,j\}}$ is winning in ${\mathcal{G}}$.
This implies $j$ is {\sc no}-decisive at $\mathbb{T}\cup \mathbbm{i}$ on path $P$ in ${\mathcal{G}}$.
Consequently, $P$ contributes to $\alpha^-_j(\mathbb{S})$.
\item $i$ is {\sc no}-decisive at $\mathbb{T}\cup \mathbbm{j}$ on path $P$ in $\hat{\mathcal{G}}$
\begin{enumerate}
\item $\mathbb{T}\cup \mathbbm{j}$ is losing in ${\mathcal{G}}$: note that $\mathbb{T}\cup \mathbbm{\{i,j\}}$ is winning in ${\mathcal{G}}$ as it is winning in $\hat{\mathcal{G}}$.
Therefore, $i$ is also {\sc no}-decisive at $\mathbb{T}\cup \mathbbm{j}$ on path $P$ in ${\mathcal{G}}$.
Thus $P$ contributes to $\alpha^-_i(\mathbb{S})$.
\item $\mathbb{T}\cup \mathbbm{j}$ is winning in ${\mathcal{G}}$:
Now consider the {\em mirror path} $P^M$ from $\mathbb{S}$ to $[\mathbbm{n}]$ which is identical
to $P$ except the roles or $i$ and $j$ are switched (that is $i$ and $j$ swap their positions in $P$).
In particular, $P^M$ passes through the division $\mathbb{T}\cup \mathbbm{i}$. But as $j$ is a {\sc yes}-blocker in ${\mathcal{G}}$ it 
must be the case that $\mathbb{T}\cup \mathbbm{i}$ is losing in ${\mathcal{G}}$. 
As $\mathbb{T}\cup \mathbbm{\{i,j\}}$ is winning in both ${\mathcal{G}}$ and $\hat{\mathcal{G}}$,
it follows that $j$ is 
{\sc no}-decisive at $\mathbb{T}\cup \mathbbm{i}$ on the path $P^M$ in ${\mathcal{G}}$.
Consequently, $P^M$ contributes to $\alpha^-_j(\mathbb{S})$.
On the other hand, suppose $P^M$ also contributes to $\hat{\alpha}^-_i(\mathbb{S})$.
This can only happen if $i$ is
{\sc no}-decisive at $\mathbb{R}$ in $\hat{\mathcal{G}}$, where $\mathbb{R}$ is the division where $P$ and $P^M$ diverge.
By definition, this implies $\mathbb{R} \cup \mathbbm{\{i,j\}}$ is winning in ${\mathcal{G}}$.
In particular, $i$ is {\sc no}-decisive at $\mathbb{R}\cup \mathbbm{j}$ on $P$ 
in $\hat{\mathcal{G}}$. So in this case $P$ also contributes to $\alpha^-_i(\mathbb{S})$.
\end{enumerate}
Ergo, the combined contribution of $P$ and $P^M$ to $\alpha^-_i(\mathbb{S})$ is at most their combined
contribution to $\alpha^-_i(\mathbb{S})+\alpha^-_j(\mathbb{S})$. The claim follows.
\qedhere
\end{enumerate}
\end{proof}

Now we prove (\ref{eq:suff-2}). 
\begin{claim}\label{cl:2}
$\hat{\alpha}^-_i(\mathbb{S}\cup \mathbbm{j}) 
\le \alpha^-_i(\mathbb{S}\cup \mathbbm{j}) + \alpha^-_j(\mathbb{S}\cup \mathbbm{i})$
\end{claim}
\begin{proof}
Take any path $P$ from $\mathbb{S}\cup \mathbbm{j}$ to $[\mathbbm{n}]$
that contains a division at which $i$ is {\sc no}-decisive {\em in the modified game} $\hat{\mathcal{G}}$.
Thus $P$ contributes to $\hat{\alpha}^-_i(\mathbb{S}\cup \mathbbm{j})$. We wish to find a matching contribution to 
$\alpha^-_i(\mathbb{S}\cup \mathbbm{j}) + \alpha^-_j(\mathbb{S}\cup \mathbbm{i})$.
Let $i$ be {\sc no}-decisive at $\mathbb{T}\cup \mathbbm{j}$ on path $P$ in $\hat{\mathcal{G}}$.
We now have two cases.
\begin{enumerate}
\item $\mathbb{T}\cup \mathbbm{j}$ is losing in ${\mathcal{G}}$: 
in this case, $i$ is also {\sc no}-decisive at $\mathbb{T}\cup \mathbbm{j}$ on path $P$ in ${\mathcal{G}}$.
Thus $P$ contributes to $\alpha^-_i(\mathbb{S}\cup \mathbbm{j})$. 
\item $\mathbb{T}\cup \mathbbm{j}$ is winning in ${\mathcal{G}}$:
Now consider the {\em twin path} $P^T$ from $\mathbb{S}\cup \mathbbm{i}$ to $[\mathbbm{n}]$ which is identical
to $P$ except the roles or $i$ and $j$ are switched (note that unlike for the mirror path $P^M$ the twin path
starts at a different division than $P$). So $P^T$ passes through $\mathbb{T}\cup \mathbbm{i}$. 
But since $j$ is a {\sc yes}-blocker in ${\mathcal{G}}$ it 
must be the case that $\mathbb{T}\cup \mathbbm{i}$ is losing in ${\mathcal{G}}$. It follows that $j$ is 
{\sc no}-decisive at $\mathbb{T}\cup \mathbbm{i}$ on the path $P^T$ in ${\mathcal{G}}$.
Thus $P^M$ contributes to $\alpha^-_j(\mathbb{S}\cup \mathbbm{i})$.
Furthermore, observe that since the path $P^T$ originates at $\mathbb{S}\cup \mathbbm{i}$, the voter $i$ cannot be 
{\sc no}-decisive  on the path. 
\end{enumerate}
Ergo, the combined contribution of $P$ and $P^T$ to $\hat{\alpha}^-_i(\mathbb{S}\cup \mathbbm{j})$ is at most their combined
contribution to $\alpha^-_i(\mathbb{S}\cup \mathbbm{j}) + \alpha^-_j(\mathbb{S}\cup \mathbbm{i})$. The claim follows.
\qedhere
\end{proof}
Together Claim~\ref{cl:1} and Claim~\ref{cl:2} imply (\ref{eq:suff-1}). 
Thus the proof of Lemma~\ref{lem:crux} is complete.
\end{proof}

We are now ready to prove Theorem~\ref{thm:RM-sub}.
\begin{proof}({\em of Theorem~\ref{thm:RM-sub}})
Lemma~\ref{lem:crux} and the dummy postulate imply that the subadditivity blocker postulate holds for $I=\{i,j\}$ where
$j$ is a {\sc yes}-blocker. But if $j$ is a {\sc yes}-blocker in $\mathcal{G}$ then $I$ is a {\sc yes}-blocker in $\hat{\mathcal{G}}$.
Thus for $|I|> 2$ the result then follows by iteratively adding the voters of $I$ to the set $\{i,j\}$.
\end{proof}

\subsection*{Proofs for Section~\ref{sec:BPP}}

\restatetheorem{thm:PB-smp}
PB does not satisfy the weak minimum-power blocker postulate (and, thus, does not satisfy the strong minimum-power blocker postulate).
\end{theorem}
\begin{proof}
Take a unanimity game.
Then each player $b$ is a {\sc yes}-blocker and has voting power 
$PB_b= \sum_{\mathbb{S}\in \mathcal{D}} \alpha_i^{PB}(\mathbb{S})\cdot \gamma^{PB}(\mathbb{S})=\frac{2}{2^n}=\frac{1}{2^{n-1}}$.
Furthermore the smallest possible {\sc yes}-successful set is $S^*=[n]$, which has cardinality $n$.
A dictator $d$ has voting power $PB_d=1$.
But then, for large $n$, we have
$$PB_b 
\ =\ \frac{1}{2^{n-1}}\ \ll \ \frac{\psi_d}{|S^*|}  \ =\  \frac{1}{n}$$ 
Ergo, the weak minimum-power blocker postulate does not hold.
\end{proof}

\restatetheorem{thm:SS-smp}
SS satisfies the strong minimum-power blocker postulate (and, thus, satisfies the weak minimum-power blocker postulate).
\end{theorem}
\begin{proof}
Since $\sum_{i\in [n]} SS_i = 1= SS_d$, this follows immediately from the fact that SS satisfies the 
blocker's share postulate (Felsenthal and Machover 1998).
\end{proof}

\restatetheorem{thm:RM-smp}
RM satisfies the strong minimum-power blocker postulate (and, thus, satisfies the weak minimum-power blocker postulate).
\end{theorem}
\begin{proof}
Let $|S^*|=k$.
By unanimity, we may assume $k\ge 1$.
Let $b$ be a {\sc yes}-blocker; it immediately follows that $b\in S^*$.
Now any set of players $S$ can be written as $S=(S\cap S^*)\cup (S\cap ([n]\setminus S^*))$.
Thus we have
\begin{equation}\label{eq:blocker-1}
RM'_b
\ =\  \frac{1}{2^n} \cdot \sum_{\mathbb{S}\in \mathcal{D}} \alpha_b(\mathbb{S}) 
\ =\  \frac{1}{2^n} \cdot \sum_{S\subseteq [n]\setminus S^*}\, \sum_{T\subseteq S^*} \alpha_b(\mathbb{S}\cup \mathbb{T} ) 
\end{equation}

Now $|T|\le k$ for any subset $T$ or $S^*$. Hence
\begin{equation}\label{eq:blocker-2}
\sum_{T\subseteq S^*} \alpha_b(\mathbb{S}\cup \mathbb{T} ) 
\ =\  \sum_{\ell=0}^k\,\sum_{T\subseteq S^*: |T|=\ell} \alpha_b(\mathbb{S}\cup \mathbb{T} ) 
\ =\   \sum_{\ell=0}^k\,\sum_{T\subseteq S^*: |T|=\ell} \left( \alpha^+_b(\mathbb{S}\cup \mathbb{T} ) 
+ \alpha^-_b(\mathbb{S}\cup \mathbb{T} )\right)
\end{equation}
Now if $\mathbb{S}\cup \mathbb{T}$ is winning then $\alpha^+_b(\mathbb{S}\cup \mathbb{T})=1$
because $b$ is a {\sc yes}-blocker and, hence, is {\sc yes}-decisive in $\mathbb{S}\cup \mathbb{T}$.
It follows that to lower bound (\ref{eq:blocker-2})
we may assume that $\mathbb{S}\cup \mathbb{T}$ is losing for any $T\subset S^*$.
 Note the strict subset is necessary here since, by monotonicity, $\mathbb{S}\cup \mathbb{S}^*$ must be winning.
Thus for any $l<k$ we obtain a lower bound of
\begin{align}\label{eq:blocker-lb}
\sum_{T\subset S^*: |T|=\ell} \alpha_b(\mathbb{S}\cup \mathbb{T} ) 
&= \sum_{T\subset S^*: |T|=\ell}  \alpha^-_b(\mathbb{S}\cup \mathbb{T} ) \nonumber \\
&= \sum_{T\subset S^*: |T|=\ell, b\in T}  \alpha^-_b(\mathbb{S}\cup \mathbb{T} ) 
	+  \sum_{T\subset S^*: |T|=\ell, b\notin T}  \alpha^-_b(\mathbb{S}\cup \mathbb{T} )\nonumber \\
&= 0 +  \sum_{T\subset S^*: |T|=\ell, b\notin T}  \alpha^-_b(\mathbb{S}\cup \mathbb{T} ) \nonumber\\
&= \sum_{T\subset S^*: |T|=\ell, b\notin T}  \alpha^-_b(\mathbb{S}\cup \mathbb{T} )
\end{align}
Here the third equality holds since, by definition, the {\sc no}-efficacy of $b$ is zero for any division in which $b$ votes {\sc yes}. 
Now recall that to calculate $\alpha^-_b(\mathbb{S}\cup \mathbb{T})$ we may perform a random walk in the {\sc no}-poset.
We need to find the probability that, starting the walk at the node for division $\mathbb{S}\cup \mathbb{T}$ we reach a node 
where $b$ is {\sc no}-decisive.
In particular, $b$ is {\sc no}-decisive at the node for $\mathbb{S}\cup \mathbb{S}^*\setminus\{\mathbbm{b}\}$.
Moreover, since $b$ is a {\sc yes}-blocker, it is {\sc no}-decisive at the node for 
$\mathbb{S}\cup \mathbb{X} \cup \mathbb{S}^*\setminus\{\mathbbm{b}\}$ for any $X\subseteq [n]\setminus(S\cup S^*)$, 
by monotonicity. This implies that if we randomly add players in order to $S\cup T$ we will reach a node where $b$ is {\sc no}-decisive if
$b$ appears after every other node of $S^*$. 
Since $|T|=\ell$, this occurs with probability $\frac{1}{k-\ell}$. Thus $\alpha^-_b(\mathbb{S}\cup \mathbb{T})\ge \frac{1}{k-\ell}$.
Note this is an inequality not equality.
(We have a lower bound on {\sc no}-efficacy since there may be other losing divisions, reachable from $\mathbb{S}\cup \mathbb{T}$, 
that do not contain $S^*\setminus\{b\}$ where $b$ is {\sc no}-decisive.)
Simple counting arguments then give
 \begin{equation}\label{eq:blocker-count}
\sum_{T\subset S^*: |T|=\ell, b\notin T}  \alpha^-_b(\mathbb{S}\cup \mathbb{T} )
\ \ge\  {k-1\choose \ell}\cdot \frac{1}{k-\ell}
\ =\ {k\choose \ell}\cdot \frac{1}{k}
\end{equation}
Observe that $\alpha^+_b(\mathbb{S}\cup \mathbb{S}^* )=1$.
So, plugging (\ref{eq:blocker-lb}) and (\ref{eq:blocker-count}) into (\ref{eq:blocker-1}) gives
\begin{align*}
RM'_b &= \frac{1}{2^n} \sum_{S\subseteq [n]\setminus S^*}\, \sum_{T\subseteq S^*} \alpha_b(\mathbb{S}\cup \mathbb{T} ) \\
&= \frac{1}{2^n} \sum_{S\subseteq [n]\setminus S^*}\, \left(1 +\sum_{\ell<k} {k\choose \ell}\cdot \frac{1}{k}\right)\\
&\ge \frac{1}{2^n} \sum_{S\subseteq [n]\setminus S^*}\, \sum_{\ell=0}^k {k\choose \ell}\cdot \frac{1}{k}\\
&=  \frac{1}{k}\cdot \frac{1}{2^n} \sum_{S\subseteq [n]\setminus S^*}\, \sum_{\ell=0}^k {k\choose \ell}\\
&=  \frac{1}{k}\cdot \frac{1}{2^n} \sum_{S\subseteq [n]\setminus S^*}\, 2^k
\end{align*}
We thus obtain
\begin{equation*}
RM'_b 
\ \ge\  \frac{1}{k}\cdot \frac{2^k}{2^n} \sum_{S\subseteq [n]\setminus S^*}\, 1
\ =\  \frac{1}{k}\cdot \frac{2^k}{2^n}\cdot  2^{n-k}
\ =\  \frac{1}{k}
\ =\  \frac{\psi_d}{k}
\end{equation*}
Therefore, {\sc (smp-1)} is satisfied. A symmetrical argument applies to {\sc (smp-2)} for a {\sc no}-blocker.
Ergo, the strong minimum-power blocker postulate is satisfied.
\end{proof}

We remark that an alternative proof of Theorem~\ref{thm:RM-smp} is via the fact that RM satisfies the strong subadditivity blocker
postulate (Theorem~\ref{thm:RM-sub}). In contrast, such an approach cannot be used to prove Theorem~\ref{thm:SS-smp}
since SS does not satisfy the strong subadditivity blocker postulate (Theorem~\ref{thm:SS-sub}).

\subsection*{Proofs for Section~\ref{sec:blocker-postulates}}

\restatetheorem{thm:added-PB}
PB satisfies the added-blocker postulate.
\end{theorem}
\begin{proof}
Recall $\psi_i^+ = \psi_i^-$ for any decisiveness measure.
Thus, a decisiveness measure that satisfies Felsenthal and Machover's (1998: 266-75) specification of the added-blocker postulate ipso facto satisfies our specification.
It follows that PB satisfies our added-blocker postulate, since, as they show, it satisfies theirs.
\end{proof}

\restatetheorem{thm:added-SS}
SS does not satisfy the added-blocker postulate.
\end{theorem}
\begin{proof}
Consider the weighted voting games $\mathcal{G}=\{3; 2,1,1\}$ and $\mathcal{G}^Y=\{8; 2,1,1,5\}$ (Felsenthal and Machover 1998).
Observe that the new player in $\mathcal{G}^Y$ is a {\sc yes}-blocker.
It can be verified that $SS^+_1(\mathcal{G})=\frac{2}{6}$ and $SS^+_2(\mathcal{G})=\frac{1}{12}$ whereas 
$SS^+_1(\mathcal{G}^Y)=\frac{5}{24}$ and $SS^+_2(\mathcal{G}^Y)=\frac{1}{24}$. Thus
$\frac{SS^+_1(\mathcal{G})}{SS^+_2(\mathcal{G})} =4 < \frac{SS^+_1(\mathcal{G}^Y)}{SS^+_2(\mathcal{G}^Y)}=5$,
in violation of {\sc (add-1)} and hence our postulate.
\end{proof}

\restatetheorem{thm:added-RM}
PB satisfies the added-blocker postulate.
\end{theorem}

\noindent To prove that RM satisfies the added-blocker postulate, we show it satisfies {\sc (add-1)} and {\sc (add-2)}.
We begin with a useful lemma.
\begin{lemma}\label{lem:added-RM}
For any player $i$, the efficacy scores $\alpha^{+RM}$ and $\alpha^{-RM}$ in $\mathcal{G}$ and $\mathcal{G}^Y$ satisfy
\begin{align}
\alpha^+_{i,\mathcal{G}^Y}(\mathbb{S}\cup\{0\}) &=\alpha^+_{i,\mathcal{G}}(\mathbb{S})  &\forall S  \tag{B1} \label{eq:B1}\\
\alpha^+_{i,\mathcal{G}^Y}((S, \bar{S}\cup{0}))   &= 0 &\forall S   \tag{B2}\label{eq:B2}
\end{align}
\end{lemma}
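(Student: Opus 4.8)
The plan is to prove the two identities separately: \eqref{eq:B2} is almost immediate from the definition of RM, while \eqref{eq:B1} requires an induction that mirrors the RM recursion across the two games $\mathcal{G}$ and $\mathcal{G}^Y$. For \eqref{eq:B2}, consider the division $(S,\bar{S}\cup\{0\})$ in which the added {\sc yes}-blocker $0$ votes {\sc no}. Since $0$ is a {\sc yes}-blocker, its {\sc no} vote forces the outcome to be {\sc no}, so this division is losing. Hence any player $i$ voting {\sc yes} in it is unsuccessful, so $\alpha_i=0$ by the second case of the RM definition; and any player voting {\sc no} has $\alpha^+_i=0$ by the definition of the {\sc yes}-efficacy score. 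Either way $\alpha^+_{i,\mathcal{G}^Y}((S,\bar{S}\cup\{0\}))=0$.

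For \eqref{eq:B1}, I would fix a player $i\in[n]$ and induct on $|S|$, establishing $\alpha^+_{i,\mathcal{G}^Y}(\mathbb{S}\cup\{0\})=\alpha^+_{i,\mathcal{G}}(\mathbb{S})$ by matching the two RM computations node by node. The trivial cases are $i\notin S$ (both sides $0$, as $i$ votes {\sc no}) and $i\in S$ with $S\notin\mathcal{W}$: here, because $S\cup\{0\}\in\mathcal{W}^Y\iff S\in\mathcal{W}$, both divisions are losing, so $i$ is an unsuccessful {\sc yes}-voter and both sides are $0$. In the remaining case $i\in S\in\mathcal{W}$, both $\mathbb{S}$ and $\mathbb{S}\cup\{0\}$ are winning, and I would first verify that decisiveness is preserved: since $(S\setminus\{i\})\cup\{0\}\in\mathcal{W}^Y\iff S\setminus\{i\}\in\mathcal{W}$, player $i$ is {\sc yes}-decisive in $\mathbb{S}\cup\{0\}$ under $\mathcal{G}^Y$ exactly when it is {\sc yes}-decisive in $\mathbb{S}$ under $\mathcal{G}$, so both efficacy scores equal $1$ simultaneously.

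The crux is the non-decisive subcase, where the RM recursion must be invoked. The key observation is a bijection between the loyal children of the two divisions. The candidate children of $\mathbb{S}\cup\{0\}$ obtained by dropping one {\sc yes}-voter are the divisions $(S\setminus\{j\})\cup\{0\}$ for $j\in S$, together with the division $S$ in which $0$ itself switches to {\sc no}. This last candidate is losing, by the blocker property exactly as in \eqref{eq:B2}, and hence is \emph{not} a loyal child; meanwhile $(S\setminus\{j\})\cup\{0\}$ is a loyal child in $\mathcal{G}^Y$ iff $S\setminus\{j\}\in\mathcal{W}$, which holds iff $S\setminus\{j\}$ is a loyal child of $\mathbb{S}$ in $\mathcal{G}$. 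Thus $\mathbb{S}$ and $\mathbb{S}\cup\{0\}$ have the same number of loyal children, paired by $T\leftrightarrow T\cup\{0\}$, so the averaging denominators $1/|LC(\cdot)|$ coincide. At each such paired child the general efficacy score agrees with the {\sc yes}-efficacy score (for children retaining $i$ as a {\sc yes}-voter this is by definition, and for the child $S\setminus\{i\}$ both scores vanish since $i$ then votes {\sc no}), so applying the inductive hypothesis to each paired child, whose original {\sc yes}-set $S\setminus\{j\}$ has the strictly smaller size $|S|-1$, gives equality of the two recursive averages and completes the induction.

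I expect this loyal-children matching to be the main obstacle. The delicate point is precisely that the blocker forces the one ``extra'' candidate child of $\mathbb{S}\cup\{0\}$ (the one in which $0$ defects to {\sc no}) to be losing, so that it drops out of the recursion; this is what makes the normalizing factors agree and yields equality in \eqref{eq:B1} exactly, rather than up to a multiplicative constant. Establishing the bijection carefully, and confirming that $\alpha^+$ and $\alpha$ may be used interchangeably along the recursion, are the steps that require the most care.
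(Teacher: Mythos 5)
Your proof is correct and follows essentially the same route as the paper's: the paper observes that the winning divisions of $\mathcal{G}^Y$ are exactly those of $\mathcal{G}$ with the blocker $0$ adjoined to the {\sc yes}-side, and concludes that the recursive computation of $\alpha^+$ is therefore identical, while (B2) follows because every division with $0$ voting {\sc no} is losing. You simply make the ``recursion is identical'' step explicit via induction on $|S|$ and the loyal-children bijection $T\leftrightarrow T\cup\{0\}$ (noting the extra candidate child in which $0$ defects is losing and so drops out), which is a welcome tightening of the paper's more terse argument rather than a different approach.
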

\begin{proof}
Consider the games $\mathcal{G}$ and $\mathcal{G}^Y$.
Recall the winning divisions in $\mathcal{G}^Y$ are of the form $\mathbb{S}\cup\{0\}$ where $S\in \mathcal{W}$ in the original voting game.
The key facts are then the following. Let $S\subseteq [n]$ contain player $i$.
Then in $\mathcal{G}^Y$ player $i$ is never {\sc yes}-decisive at $(S, \bar{S}\cup{0})$ because the division is a losing 
division (given that $S$ does not contain the blocker $0$).
Furthermore, player $i$ is {\sc yes}-decisive at $\mathbb{S}\cup\{0\}$ in $\mathcal{G}^Y$ if and only if it is 
originally {\sc yes}-decisive at $\mathbb{S}$ in $\mathcal{G}$.

Next take $S\subseteq [n]$ where $S$ does not contain player $i$.
Then in $\mathcal{G}^Y$ player $i$ is never {\sc no}-decisive at 
$(S, \bar{S}\cup{0})$ because $S\cup\{i\}\notin\mathcal{W}$ (given that it does not contain the blocker~$0$).
Furthermore, player $i$ is {\sc no}-decisive at $\mathbb{S}\cup\{0\}$ in $\mathcal{G}^Y$ if and only if $i$ is 
originally {\sc no}-decisive at $\mathbb{S}$ in $\mathcal{G}$.

These key facts imply that the {\sc yes}-outcomes for $\mathcal{G}^Y$ are identical to the {\sc yes}-outcomes for $\mathcal{G}$, 
except that for each winning coalition the set of {\sc yes}-voters now also contains the blocker~$0$ (equivalently, the set of {\sc no}-voters in 
their corresponding divisions are identical).
This immediately implies that we recursively calculate the {\sc yes}-efficacy scores, they are identical for the corresponding
winning coalitions in $\mathcal{G}$ and $\mathcal{G}^Y$.
That is, $\alpha^+_{i,\mathcal{G}^Y}(\mathbb{S}\cup\{0\}) =\alpha^+_{i,\mathcal{G}}(\mathbb{S})$ and (\ref{eq:B1}) holds.

Furthermore, given that $0$ is a {\sc yes}-blocker, any $S\subseteq [n]$ is a losing division. 
Thus $\alpha^+_{i,\mathcal{G}^Y}(S, \bar{S}\cup{0})=0$ and (\ref{eq:B2}) holds.
\end{proof}

\begin{lemma}\label{lem:added-yes}
RM satisfies {\sc (add-1)}.
\end{lemma}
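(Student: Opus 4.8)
The plan is to leverage Lemma~\ref{lem:added-RM} to show that introducing the added {\sc yes}-blocker scales every player's a priori {\sc yes}-voting power by the common factor $\tfrac{1}{2}$, whence all pairwise ratios are automatically preserved. Under the uniform a priori weighting we have $\psi^+_i(\mathcal{G}) = \tfrac{1}{2^n}\sum_{S\subseteq[n]}\alpha^+_{i,\mathcal{G}}(\mathbb{S})$, while in $\mathcal{G}^Y$, which has $n+1$ players and hence $2^{n+1}$ divisions, $\psi^+_i(\mathcal{G}^Y) = \tfrac{1}{2^{n+1}}\sum_{\mathbb{R}}\alpha^+_{i,\mathcal{G}^Y}(\mathbb{R})$.

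First I would split the sum over divisions $\mathbb{R}$ of $[n]\cup\{0\}$ according to the vote of the added player $0$. The divisions in which $0$ votes {\sc yes} are precisely those of the form $\mathbb{S}\cup\{0\}$ for $S\subseteq[n]$, and those in which $0$ votes {\sc no} are precisely those of the form $(S,\bar{S}\cup\{0\})$ for $S\subseteq[n]$. Equation~(\ref{eq:B2}) makes every term in the second family vanish, while (\ref{eq:B1}) identifies each term in the first family with the corresponding original score $\alpha^+_{i,\mathcal{G}}(\mathbb{S})$.

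Combining these gives $\sum_{\mathbb{R}}\alpha^+_{i,\mathcal{G}^Y}(\mathbb{R}) = \sum_{S\subseteq[n]}\alpha^+_{i,\mathcal{G}}(\mathbb{S})$, and therefore $\psi^+_i(\mathcal{G}^Y) = \tfrac{1}{2^{n+1}}\sum_{S\subseteq[n]}\alpha^+_{i,\mathcal{G}}(\mathbb{S}) = \tfrac{1}{2}\,\psi^+_i(\mathcal{G})$. Since the factor $\tfrac{1}{2}$ is independent of the player, it cancels in the ratio $\psi^+_i(\mathcal{G}^Y)/\psi^+_j(\mathcal{G}^Y)$, establishing {\sc (add-1)}. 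The argument is short once Lemma~\ref{lem:added-RM} is available; the only step requiring real attention is correctly tracking the change in the normalizing constant from $2^n$ to $2^{n+1}$ as the player set grows by one, since it is precisely the mismatch between this constant and the doubled division count that yields the clean factor of one half. The substantive content — that the blocker induces no \emph{differential} distortion of individual {\sc yes}-efficacy scores — is entirely encapsulated in (\ref{eq:B1}) and (\ref{eq:B2}), so no genuine obstacle remains at this stage.
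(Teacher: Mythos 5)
Your proposal is correct and follows essentially the same route as the paper's own proof: the identical decomposition of the divisions of $\mathcal{G}^Y$ by the vote of player $0$, the same use of (\ref{eq:B1}) and (\ref{eq:B2}) to reduce the sum, and the same observation that the uniform factor $\tfrac{1}{2}$ cancels in every pairwise ratio. Nothing further is needed.
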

\begin{proof}
For any player $i$ in the original game
\begin{align*}
RM'^+_i(\mathcal{G}^Y) 
&= \sum_{\mathbb{S}\in \mathcal{D}^Y} \alpha^+_{i,\mathcal{G}^Y}(\mathbb{S})\cdot \frac{1}{2^{n+1}} \\
&= \sum_{\mathbb{S}\in \mathcal{D}} \alpha^+_{i,\mathcal{G}^Y}((S, \bar{S}\cup{0}))\cdot \frac{1}{2^{n+1}} 
	+ \sum_{\mathbb{S}\in \mathcal{D}} \alpha^+_{i,\mathcal{G}^Y}(\mathbb{S}\cup\{0\})\cdot\frac{1}{2^{n+1}}  \\
&= 0 + \sum_{\mathbb{S}\in \mathcal{D}} \alpha^+_{i,\mathcal{G}^Y}(\mathbb{S}\cup\{0\})\cdot \frac{1}{2^{n+1}}  \\
&= \sum_{\mathbb{S}\in \mathcal{D}} \alpha^+_{i,\mathcal{G}}(\mathbb{S})\cdot \frac{1}{2^{n+1}} \\
&= \frac12 \cdot \sum_{\mathbb{S}\in \mathcal{D}} \alpha^+_{i,\mathcal{G}}(\mathbb{S})\cdot \frac{1}{2^{n}} \\
&= \frac12 \cdot RM'^+_i(\mathcal{G}) 
\end{align*}
Here the third and fourth equalities hold by (\ref{eq:B1}) and (\ref{eq:B2}) of Lemma~\ref{lem:added-yes}, respectively.
Similarly, for any player $j$ in the original game, we have 
$RM'^+_j(\mathcal{G}^Y)=  \frac12 \cdot RM'^+_j(\mathcal{G})$.
Consequently,
\begin{align*}
\frac{RM'^+_i(\mathcal{G}^Y)}{RM'^+_j(\mathcal{G}^Y)}
\ =\  \frac{ \frac12 \cdot RM'^+_i(\mathcal{G})}{ \frac12 \cdot RM'^+_j(\mathcal{G})}
\ =\ \frac{RM'^+_i(\mathcal{G})}{RM'^+_j(\mathcal{G})}
\end{align*}
Ergo, the added-{\sc yes}-blocker postulate {\sc (add-1)} is satisfied.
\end{proof}

\begin{lemma}\label{lem:added-no}
RM satisfies {\sc (add-2)}.
\end{lemma}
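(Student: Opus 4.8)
The plan is to mirror the proof of Lemma~\ref{lem:added-yes} under the interchange of {\sc yes} and {\sc no}, replacing the added {\sc yes}-blocker game $\mathcal{G}^Y$ by the added {\sc no}-blocker game $\mathcal{G}^N$, winning divisions by losing divisions, and {\sc yes}-efficacy by {\sc no}-efficacy. First I would establish the {\sc no}-analogue of Lemma~\ref{lem:added-RM}: for every player $i$ in the original game, $\alpha^-_{i,\mathcal{G}^N}((S,\bar{S}\cup\{0\})) = \alpha^-_{i,\mathcal{G}}(\mathbb{S})$ for all $S$, and $\alpha^-_{i,\mathcal{G}^N}(\mathbb{S}\cup\{0\}) = 0$ for all $S$. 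These are precisely the duals of the identities (\ref{eq:B1}) and (\ref{eq:B2}).

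To prove these two identities, I would observe that in $\mathcal{G}^N$ every division in which $0$ votes {\sc yes} is winning, so its {\sc no}-efficacy score vanishes for every player; this yields the second identity. Conversely, the losing divisions of $\mathcal{G}^N$ are exactly the losing divisions $\mathbb{S}$ of $\mathcal{G}$ with $0$ adjoined as a {\sc no}-voter. The crucial point is that the loyal-child recursion on the losing side is preserved: from a losing division $(S,\bar{S}\cup\{0\})$, switching $0$ to {\sc yes} always produces a winning division (since $0$ is a {\sc no}-blocker) and hence is never a loyal child, whereas switching any other {\sc no}-voter $j\in\bar{S}$ to {\sc yes} reproduces exactly the loyal-child relation of $\mathcal{G}$ with $0$ still voting {\sc no}. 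Since, moreover, a player $i\ne 0$ is {\sc no}-decisive at $(S,\bar{S}\cup\{0\})$ in $\mathcal{G}^N$ precisely when it is {\sc no}-decisive at $\mathbb{S}$ in $\mathcal{G}$, the recursively-defined {\sc no}-efficacy scores coincide, giving the first identity.

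With these identities in hand, I would compute $RM'^-_i(\mathcal{G}^N)$ by splitting the division set $\mathcal{D}^N$ into the divisions in which $0$ votes {\sc yes} and those in which $0$ votes {\sc no}, exactly as in Lemma~\ref{lem:added-yes}. The first group contributes nothing by the second identity, and the second group converts termwise into the scores of $\mathcal{G}$ by the first identity, yielding $RM'^-_i(\mathcal{G}^N) = \frac12\, RM'^-_i(\mathcal{G})$. The common factor $\frac12$ then cancels in the ratio $RM'^-_i(\mathcal{G}^N)/RM'^-_j(\mathcal{G}^N)$, establishing {\sc (add-2)}.

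The main obstacle is the verification that the {\sc no}-efficacy recursion is genuinely unperturbed by the added blocker, i.e.\ the preservation of the loyal-child structure among losing divisions. Everything hinges on the observation that the one new move available in $\mathcal{G}^N$ --- switching the blocker $0$ from {\sc no} to {\sc yes} --- always leaves the set of losing divisions, and so is automatically excluded from the loyal-child recursion; this is precisely the dual of the fact, used for {\sc (add-1)}, that the blocker in $\mathcal{G}^Y$ appears in every winning division. Once this point is checked, the remaining computation is routine.
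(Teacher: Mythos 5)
Your proposal is correct and follows essentially the same route as the paper: establish the two dual identities $\alpha^-_{i,\mathcal{G}^N}(\mathbb{S}\cup\{0\}) = 0$ and $\alpha^-_{i,\mathcal{G}^N}((S,\bar{S}\cup\{0\})) = \alpha^-_{i,\mathcal{G}}(\mathbb{S})$, split the sum over $\mathcal{D}^N$ by the blocker's vote to get $RM'^-_i(\mathcal{G}^N)=\frac12 RM'^-_i(\mathcal{G})$, and cancel the factor in the ratio. Indeed, your verification that the loyal-child recursion on losing divisions is undisturbed by the added blocker is spelled out in more detail than the paper's proof, which simply invokes symmetry with Lemma~\ref{lem:added-yes}.
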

\begin{proof}
Applying a symmetric argument to that used in the proof of Lemma~\ref{lem:added-yes} we have the following.
For any player $i$ in the original game, the efficacy scores $\alpha^+$ and $\alpha^-$ in $\mathcal{G}$ and $\mathcal{G}^N$ satisfy 
\begin{align*}
\alpha^-_{i,\mathcal{G}^N}(\mathbb{S}\cup\{0\}) &= 0 &\forall S\\
\alpha^-_{i,\mathcal{G}^N}((S, \bar{S}\cup\{0\}))  &=\alpha^-_{i,\mathcal{G}}(\mathbb{S}) &\forall S
\end{align*}
Then, applying a symmetric argument to that used in the proof of Lemma~\ref{lem:added-yes} 
completes the proof.
\end{proof}
It follows by Lemmas~\ref{lem:added-yes} and~\ref{lem:added-no}
that RM satisfies the added-blocker postulate and Theorem~\ref{thm:added-RM} holds.

\end{document}